\documentclass[12pt]{article}
\pdfoutput=1
\usepackage[margin=1in]{geometry}
\usepackage{amsmath,amssymb,amsthm}
\usepackage{mathtools}
\usepackage{stmaryrd}

\usepackage{epsfig}
\usepackage{mdframed}
\usepackage{xr}
\usepackage{xr-hyper}
\usepackage[colorlinks = true]{hyperref}
\usepackage{xcolor}
\definecolor{darkred}  {rgb}{0.5,0,0}
\definecolor{darkblue} {rgb}{0,0,0.5}
\definecolor{darkgreen}{rgb}{0,0.5,0}
\hypersetup{
  urlcolor   = blue,         
  linkcolor  = darkblue,     
  citecolor  = darkgreen,    
  filecolor  = darkred       
}
\usepackage[utf8]{inputenc}

\usepackage{cleveref}
\usepackage{todonotes}
\newtheorem{theorem}{Theorem}
\newtheorem*{theorem*}{Theorem}
\newtheorem*{remark*}{Remark}

\newtheorem{definition}[theorem]{Definition}
\newtheorem{algorithm}[theorem]{Algorithm}
\newtheorem{claim}[theorem]{Claim}
\newtheorem{lemma}[theorem]{Lemma}

\newtheorem{corollary}[theorem]{Corollary}

\def\Z{{\mathbb{Z}}}
\def\R{\mathbb{R}}
\def\C{\mathbb{C}}
\def\poly{{\rm poly}}
\def\bdd{{\rm BDD}}

\def\samplebdd{{\rm SampleBDD}}
\def\samplehip{{\rm SampleHIP}}

\newcommand{\smfrac}[2]{\mbox{$\frac{#1}{#2}$}}

\newcommand\ket[1]{{ |{#1} \rangle }}

\def\dist{{\rm dist}}

\newcommand{\om}{\omega}

\newcommand{\phase}{\theta}
\newcommand{\phasesamp}{h}

\newcommand{\eva}[1][]{\ket{\psi_{#1}}}
\newcommand{\eps}{{\varepsilon}}

\newcommand{\pmr}{[\pm\rl]}
\newcommand{\ppower}{T}

\newcommand{\zq}{\Z_q}

\newcommand{\zqn}{\Z_q^n}

\newcommand{\matB}{\mathbf B}

\newcommand{\G}{G} 
\newcommand{\matG}{\mathbf G} 
\newcommand{\matI}{\mathbf I}
\newcommand{\matR}{\mathbf R}
\newcommand{\matS}{\mathbf S}
\newcommand{\coeffs}{{\tilde{C}}}

\newcommand{\va}{\mathbf a}
\newcommand{\vb}{\mathbf b}
\newcommand{\vc}{\mathbf c}
\newcommand{\vd}{\mathbf d}
\newcommand{\ve}{\mathbf e}
\newcommand{\vg}{\mathbf g}
\newcommand{\vq}{\mathbf q}
\newcommand{\vr}{\mathbf r}
\newcommand{\vs}{\mathbf s}
\newcommand{\vt}{\mathbf t}
\newcommand{\vu}{\mathbf u}
\newcommand{\vv}{\mathbf v}

\newcommand{\vx}{\mathbf x}
\newcommand{\vy}{\mathbf y}
\newcommand{\vz}{\mathbf z}

\newcommand{\vzero}{{\mathbf{0}}}
\newcommand{\vD}{\mathbf \Delta}

\newcommand{\grank}{r}  
\newcommand{\gr}{r}  
\newcommand{\sampsp}{\zq^\grank}

\newcommand{\rl}{{\sigma}}
\renewcommand{\sl}{{2\rl}}

\newcommand{\groupdecomp}{(\matG,\vq,\grank)}

\newcommand{\charval}[2]{\chi_{#1}(#2)}
 \newcommand{\charphase}[2]{f(#1,#2)}
 \renewcommand{\charphase}[2]{#2 \cdot #1}

\newcommand{\fgr}{\text{finite-group-rank}}

\newcommand{\rndL}{{\tilde{L}}}
\newcommand{\rndlambda}{\tilde{\lambda}}


\newcommand{\rndmatB}{{\tilde{\matB}}}
\newcommand{\rndG}{\tilde{G}}
\newcommand{\rndmatG}{{\tilde{\matG}}}

\newcommand{\pcs}[1]{\ket{\psi_{#1}}}
\newcommand{\cube}[1]{\ket{C(#1)}}  
\newcommand{\cubep}[1]{\ket{C(#1)}}  
\newcommand{\cubet}[1]{\langle C(#1) }  

\newcommand{\fbdd}{\eps_1}
\newcommand{\fbddr}{\tilde{\eps}_1}  
\renewcommand{\>}{\rangle}
\newcommand{\<}{\langle}
\newcommand{\shape}{C}
\newcommand{\dimin}{n} 
\newcommand{\dimout}{m} 

\newenvironment{customthm}[1]
  {\innercustomthm}
  {\endinnercustomthm}

\begin{document}

 \title{An efficient quantum algorithm for lattice problems 
   achieving subexponential approximation factor }

 \author{Lior Eldar\thanks{eldar.lior@gmail.com}
   \and
   Sean Hallgren\thanks{
     Department of Computer Science and Engineering, Pennsylvania State
   University, 350W Westgate Building, University Park, PA  16802,
   USA.  Partially supported by National Science
   Foundation awards CNS-2001470 and OIA-2040667, and by a
   Vannevar Bush Faculty Fellowship from the US Department of Defense.
   This work was done in part while visiting the Simons Institute for the
   Theory of Computing.}
}

\date{}
 \maketitle 

 \vspace{-.2in}
   
\begin{abstract}
 We give a quantum algorithm for solving the Bounded
 Distance Decoding (BDD) problem with a subexponential approximation factor
 on a class of integer lattices.  The quantum algorithm uses a
 well-known but challenging-to-use quantum state on lattices as a type of
 approximate quantum eigenvector to randomly self-reduce the BDD
 instance to a random BDD instance which is solvable classically.  The
 running time of the quantum algorithm is polynomial for one range of 
 approximation factors and subexponential time for a second range of
 approximation factors.

 The subclass of lattices we study has a natural description in terms of the
 lattice's periodicity and finite abelian group rank.  This view makes
 for a clean quantum algorithm in terms of finite abelian groups, uses
 very relatively little from lattice theory, and suggests exploring approximation
 algorithms for lattice problems in parameters other than dimension
 alone.  

 A talk on this paper sparked many lively discussions and resulted in
 a new classical algorithm matching part of our result.  We leave it
 as a challenge to give a classcial algorithm matching the general case.
\end{abstract}

\section{Introduction}

We give an efficient quantum algorithm for a special case of the closest
lattice vector problem in a new range of approximation
factors, namely the subexponential range.  In this type of problem a basis $\matB \in \Z^{n \times n}$
and a target vector $\vt\in \Z^n$ are given and the goal is to
compute the closest lattice vector $\matB\vc$ to $\vt$ for integer
coefficients.  This
paper is about the important special case called Bounded Distance
Decoding (BDD) with parameter $\fbdd$.  It has the extra promise that
the distance is bounded in the sense that there exists $\matB\vc\in L$
such that $\|\matB\vc- \vt\| < \fbdd \lambda_1$, where $\lambda_1$
is the shortest nonzero vector length in $L$, and $\fbdd \leq 1/2$,
making the answer unique.  The term $\fbdd$ is the approximaton factor
and it is typically a function of $n$, the lattice dimension.  A
lattice can be specified by an infinite number of bases,
making the problem difficult.

There are three broad ranges
of approximation factors where lattice 
problems are unlikely to be NP-complete.
\begin{figure}[h]
  \label{fig:pic-ranges}
  \begin{center}
    \includegraphics[width=5.0in]{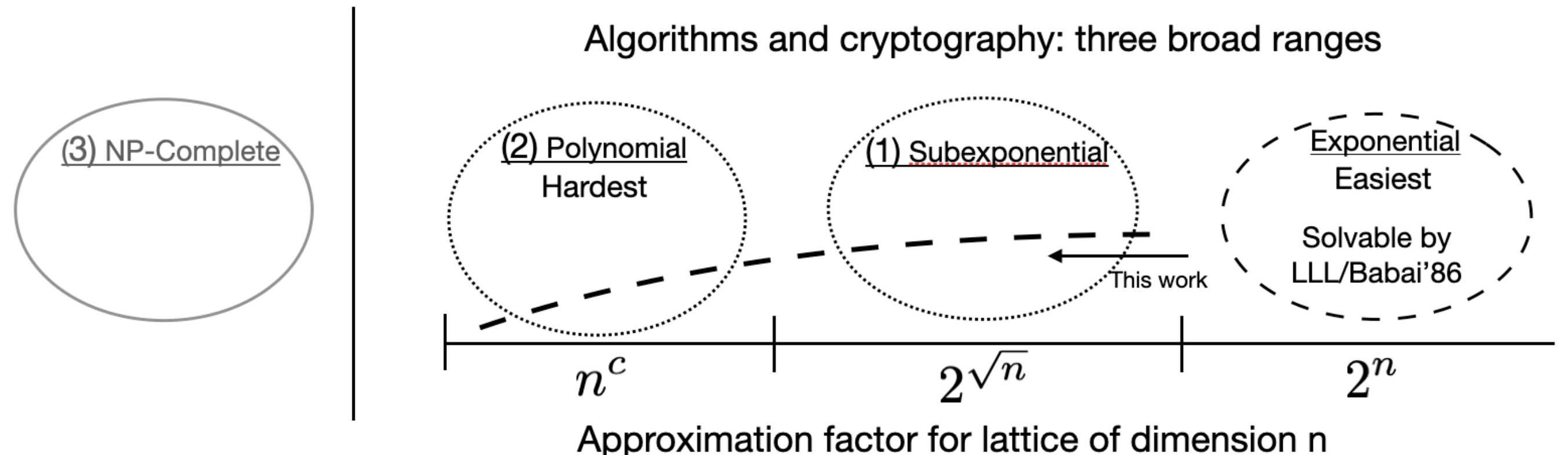}\quad \quad 
  \end{center}
\end{figure}
 Starting at the right end, the exponential range typically has
the 
form $\eps_1^{-1} = 2^n$ and was solved efficiently in the 80's.  Lentra, Lenstra, and
Lov\'{a}sz~\cite{LLL82} gave an algorithm
to compute an approximate shortest lattice vector, Babai~\cite{Bab86}
gave algorithms for computing an approximate closest lattice vector to
a target point, Kannan~\cite{Kan87}
gave an exponential time
enumeration algorithm, and
Schnorr~\cite{S87,S94} extended these three by trading off
running time  in exchange for better approximation factors.

Adjacent to the exponential range is the subexponential range, which
is the focus of this paper.   Despite several decades of big advances in
lattices, this region appears to either be very difficult, or has been
neglected.  In special-case lattices that allow more more efficient
cryptography because of additional algebraic structure, a sequence of
papers work led to an efficient quantum algorithm for approximating
the shortest vector~\cite{EHKS14,BS16,CGS14,CDPR16}.  Lattices with
small determinant have also been examined~\cite{CL15}.  The subexponential region
has played a crucial role in recent advances in fully homomorphic 
encryption (FHE)~\cite{BV11b,BV11a,GSW13}, where it is assumed that certain
parameters cannot be solved efficiently.

At the hardest end of the range for algorithms and cryptography are
polynomial approximation factors and very important questions about
how well existing algorithms work and can be optimized for concrete
security for the NIST standardization process.

Cryptography built on the LWE problem, which  
is as hard as worst-case lattice problems for good theoretical
security, also allows many new primitives such as FHE
and testing whether or not machines are  
quantum~\cite{BCMVV18, Mah18b},
and quantum FHE \cite{Mah18a}.

Therefore, the most important and pressing question is whether or
not efficient algorithms exist for the polynomial
approximation factor range.  The
realistic approach is to start with problems in the range adjacent to the ones that
are already solvable, which is the subexponential range.  Even if an
efficient algorithm exists for the polynomial range it may be too
difficult to find in one step, but the hope is that the techniques
here will be applicable to a broader range of cases.

We propose a partition of lattices into finer 
 blocks than just by dimension alone.
The partition consists of sets of lattices ${\cal L}(n,q,r)$ indexed
by lattice dimension $n$, periodicity $q$, and finite group rank
$\grank$.  The periodicity $q$ of a lattice is the minimum integer $q$
such that lattice contains the subgroup $q\cdot \Z^n\subseteq L$.
Therefore $L \bmod q$ is a finite abelian subgroup of $\Z_q^n$ and can
be decomposed as $\Z_{q_1} \times \cdots \times \Z_{q_\grank}$. 

Through this lens we give a quantum algorithm on a subset
of lattices with parameter $\grank \log q$ achieving a subexponential
approximation factor of the form $2^{-\sqrt{\grank \log q}}$ and
running in time $\poly(n,\log q)$
(Theorem~\ref{thm:main1}).  To give 
a simplified comparison to existing algorithms analyzed in terms of dimension, for
example, lattices with $\sqrt{\grank \log q }< n$ we get an improvement
over Babai's algorithm, and BDD on lattices with finite group rank
$\gr=n^{1/4}$ and periodicity $q=2^{\sqrt{n}}$ can be solved for
approximation factor $2^{-n^{3/8}}$.  The periodicity $q$ cannot be
smaller than $\gr$ or BDD becomes trivial.  More generally, the new
parameter range can be 
seen in the figures for $r=1$ and general $r$.  Changing $\gr$ changes
the trivial region.  The left axis has the
parameter $\gr \log q$ and the bottom axis plots the log of the
approximation factor. 

\begin{figure}[h]
  \label{fig:rvalues}
  \begin{center}
  \includegraphics[width=2.0in]{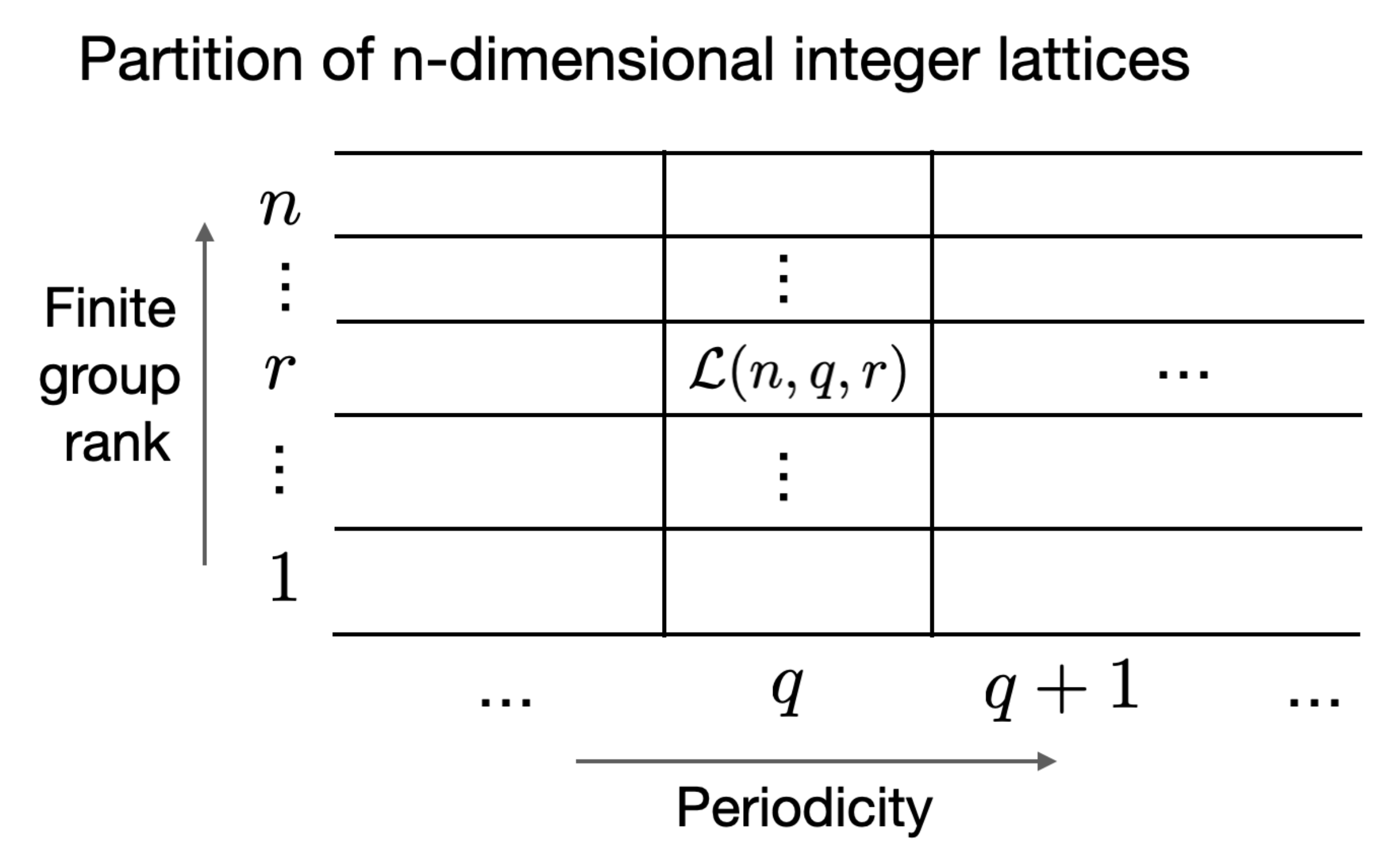}\quad
    \includegraphics[width=2.0in]{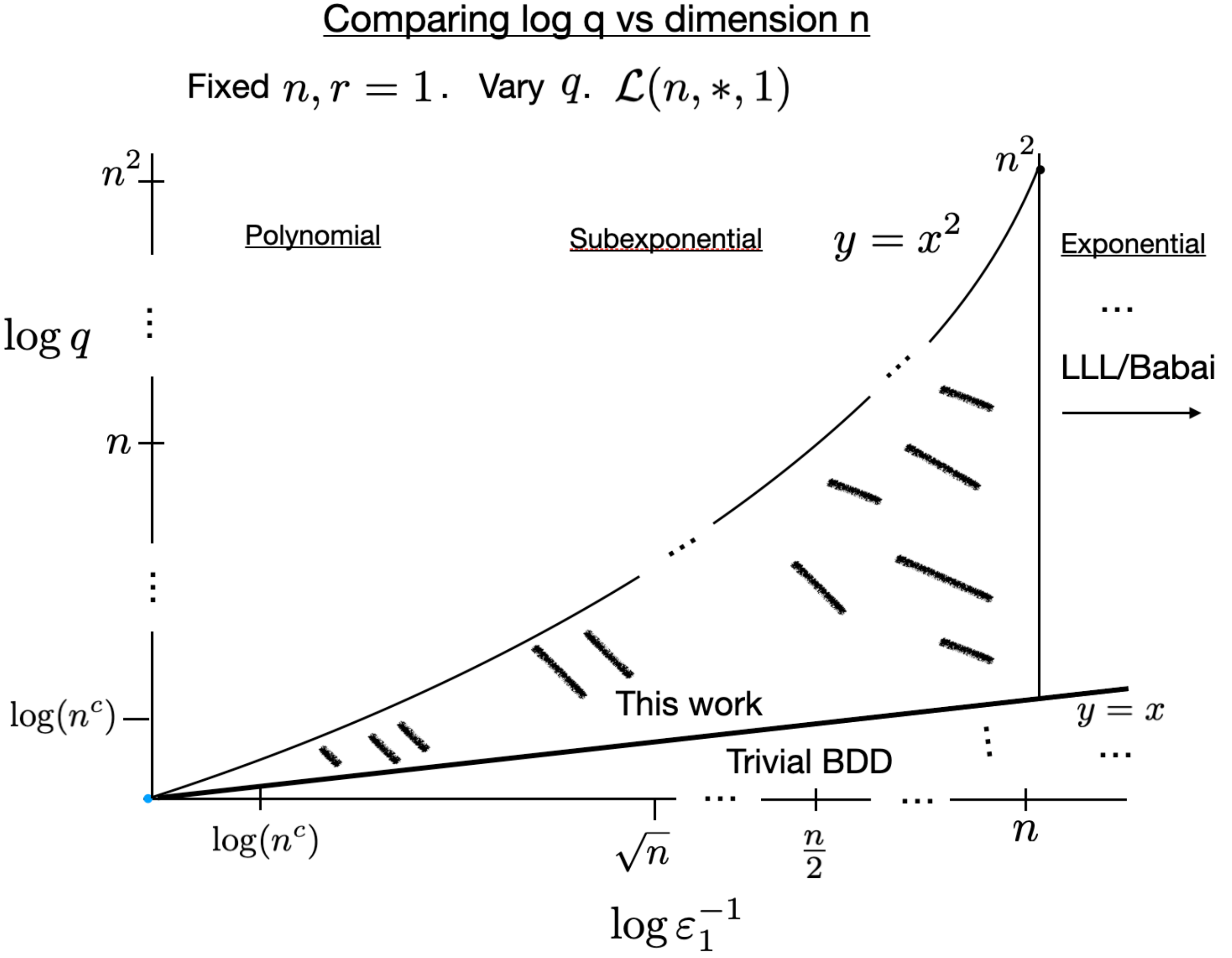}\quad 
  \includegraphics[width=2.0in]{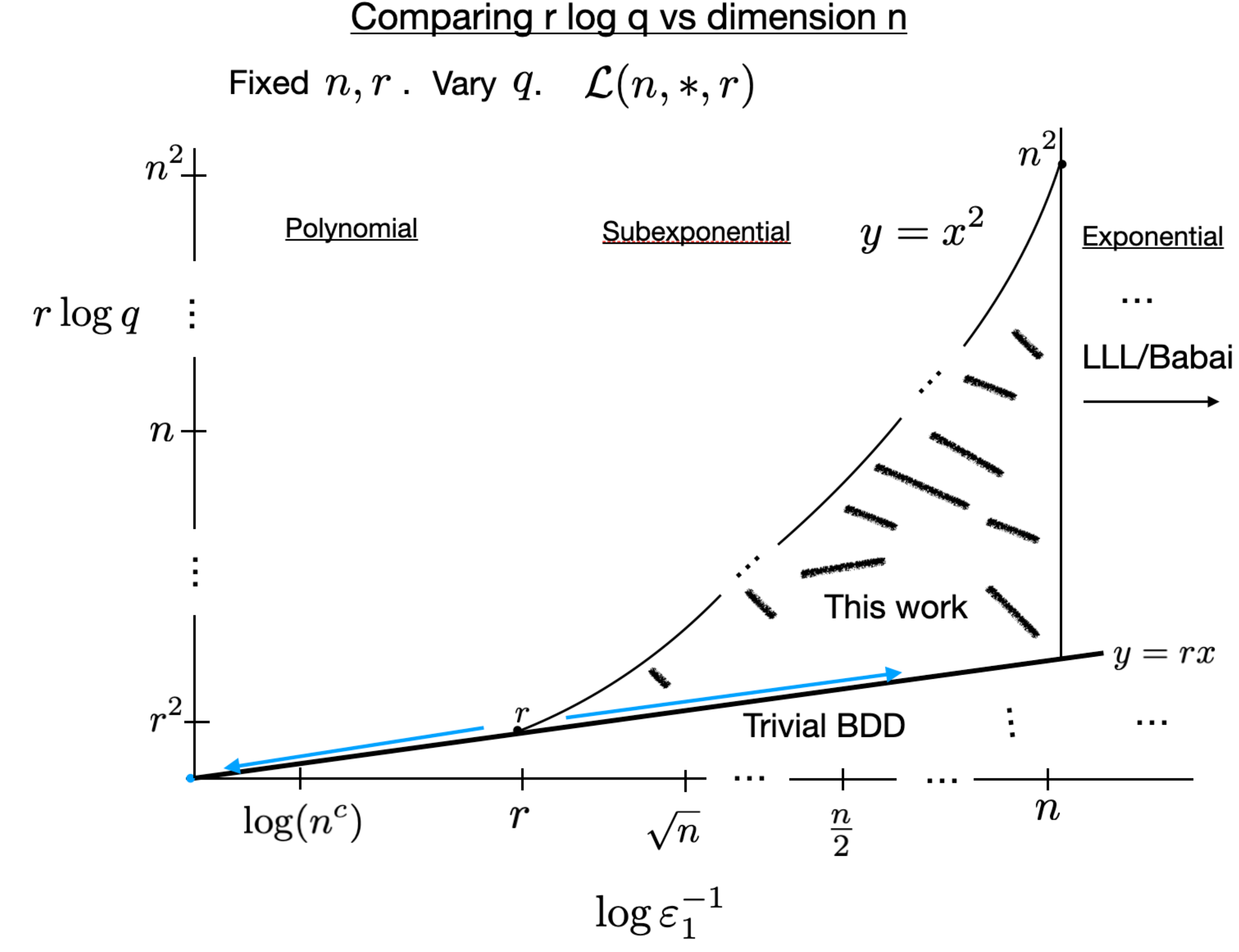}
\end{center}
\end{figure}

Moving beyond polynomial time, Schnorr's hierarchy can also be applied
to the output of the quantum algorithm as a black box to achieve an
approximation factor of $\exp(-\frac{ \gr( \log q)\log \beta}{\beta})$
in time $\beta^\beta \poly(n,\log q)$.  For example, with
$\eps \leq 1/2$, there is a quantum algorithm achieving approximation
factor $2^{n^\eps \log n}$ in time $2^{n^{1-2\eps}}$, while
applying Schnorr to the original lattice
gives time $2^{n^{1-\eps}} $, i.e., with an
extra $n^\eps$ in the time exponent.  

{\bf Idea of proof.} In the context of the partition, we reduce BDD on worst-case lattices
in ${\cal L}(n,q,\gr)$, to a problem we'll call
\emph{$\fbddr$-random-BDD} where a
  random matrix $\rndmatB \in [q]^{m\times \gr}$ is chosen and $\rndL
  \in {\cal L}(m,q,r)$
  has as generators the columns of $[\matB| q\cdot \matI]$, and a
  target vector $\tilde{\vt}$ is given with distance at most
  $\fbddr\rndlambda_1$ to $\rndL$.  From lattice theory we
  need that these types of lattices have a
  long shortest vector with high probability~\cite{Mic19notes}.  The
  problem can be solved for approximation factor 
  $\fbdd= 2^{-\sqrt{\gr \log q}}$ when the dimension is $m=\sqrt{\gr \log q}$
  by running LLL and Babai's algorithm.  Different from the worst-case
  BDD problem, the random-BDD problem can be solved for higher
  dimensions too because  deleting rows until $\sqrt{\gr \log q}$ is
  reached is still a random instance.  

To randomly reduce the BDD instance to a random-BDD instance in a lower
dimension, we revisit an old state with a ``phase problem'' and find a
way to use it.  Given a lattice basis and a radius, a goal for the last 20+ years has
been to compute the quantum state
$\pcs{\vzero} =\sum_{\vv\in L} \sum_{\vz \in \shape} \ket{\vv+\vz}$,
where $\shape$ is some shape with the prescribed radius
such as a cube, Gaussian, or sphere (also see~\cite{AR05}).  We use a
cube for simplicity.

The main approach for computing $\pcs{\vzero}$ is to compute a
superposition over
coefficients $\vc$ in the first register and over the cube in the second
register to get $\sum_{\vc} \ket{\vc} \otimes \sum_{\vz\in \shape}
\ket{\vz}$, then to entangle the registers by adding the 
corresponding lattice vector into the
second register to get $\sum_{\vc} \sum_{\vz\in \shape} \ket{\vc,
  \matB\vc+\vz}$.  The last step would be to ``uncompute'' in the
first register and this is where the state becomes difficult to use.  If an algorithm could solve BDD
at this point 
then it could use $\matB\vc+\vz$ to compute $\vc$ and
uncompute the first register, but this is circular.

This challenge of uncomputing a register appears in lattice problems,
graph isomorphism approaches, as well as other problems~\cite{AT07}.
Nevertheless, Regev~\cite{Reg09} found a way to use this in a
constructive way by using an LWE oracle to erase the coefficient, and 
as a result of his construction, reduce worst-case lattice problems to
LWE.  The more typical approach for algorithms might be to compute the
quantum Fourier transform of the first register, because it is
possible, and measure it. 
The resulting state is
$\pcs{\va} = \sum_{\vc} \om_q^{\vc \cdot \va} \sum_{\vz\in \shape}
\ket{\matB \vc+\vz}$, for a random vector $\va$.
This changes the ``uncomputing'' problem into a
``phase'' problem because
$\pcs{\va}$ is similiar to $\pcs{\vzero}$, the desired $\va=\vzero$ case, but has the phase
$\om_q^{\vc\cdot \va}$ mixed in in a problematic way.  The difficulty
is that it is no
longer clear how to use $\pcs{\va}$.  In~\cite{BKSW18} states
related to this but with Gaussians were used to show an equivalence
between LWE and an extension of a certain nonabelian hidden subgroup problem.  In
\cite{CLZ21} the Arora-Ge
algorithm~\cite{AG10} for
LWE  was used to uncompute projections.

{\bf Overcoming the difficulty.}
In this paper we revisit the state $\pcs{\va} =
\sum_{\vc} \om_q^{\vc \cdot \va} \sum_{\vz\in \shape}
\ket{\matB \vc+\vz}$ which we call a Phased Cube State (PCS), because
there is a cube around each lattice point, and each cube has single
phase across it.  The value $q$ is chosen as the periodicity of the
lattice, and $L\bmod q$ is a subgroup $\Z_q^n$ where computations are done.
This state can be 
efficiently created for any side length, and  with a  uniformly random 
and known $\va$.  The main idea is to see that $\pcs{\va}$ is almost
an eigenvector of shifts by vectors close to the lattice, as in BDD
instances, and that quantum phase estimation can compute an
approximation of that information.  More specifically, for
$\vx \in \Z_q^n$, let $U_\vx \ket{\vy}=\ket{\vy+\vx}$ be a shift
operator inside $\Z_q^n$.  Then for $\matB \vc\in L$,
$U_{\matB\vc} \pcs{\va} = \om_q^{\va \cdot \vc}\pcs{\va}$.  If
desired, the quantum phase estimation algorithm can be used to compute the inner
product $\va\cdot \vc$, and repeating the process results in these inner
products for different random $\va$, and the coefficients $\vc$ can be computed.

For a BDD instance with $\vt=\matB \vc+\mathbf\Delta$ and
$\mathbf \Delta$ controlled by the BDD promise, shifting by $\vt$
results in
$U_\vt \pcs{\va} = \om_q^{\vc\cdot \va} U_{\mathbf \Delta}
\pcs{\va} \approx \om_q^{\vc\cdot \va}\pcs{\va} $ and we show that
quantum phase estimation still returns an approximation of
$\vc\cdot \va$.  Quantum phase estimation exponentiates the operator
to the power of the precision requested, and because of degradation of
$\pcs{\va}$ for higher powers, because $U_\vt^k \pcs{\va} =
\om_q^{k \va \cdot \vs} U_{k\mathbf \Delta} \pcs{\va}$, this limits how
much information can be 
extracted.  To accomodate the set of possible BDD target vectors we
define the notation of having a set of operators together with a
single approximate eigenvector.  With the worst-case lattice problem
BDD as input, this quantum subroutine is used to sample noisy inner
products and construct a random BDD instance in a lower solvable
dimension.

To summarize, we give a new quantum algorithm solving BDD on a range
of subexponential approximation factors.

\subsection{ Comparison to classical algorithms and open problems}
The case of $\gr=1$ and exponential $q$ includes the well-studied
Hidden Number Problem~\cite{BV96,Aka09} which appears to need
structure beyond the worst-case to solve until this work.

\begin{remark*}
  A talk given on this work in September 2021 at the Simons Institute
  for the Theory of Computing sparked much productive discussion.  In
  particular, the paper~\cite{DW21} was posted with a classical algorithm solving a
  part of what we solve that had not appeared in the literature
  before.  After going through existing literature, having discussions
  with the community, and waiting for responses, we leave it as a
  challenge to provide a classical algorithm matching  the subexponential-time
  quantum algorithm.
\end{remark*}

For the polynomial-time range, we show
\begin{customthm}{\ref{thm:main1}}
  There is a $\poly(n, \log q)$-time quantum algorithm solving 
  $2^{-\Omega(\sqrt{\grank \log q})}$-BDD on lattices of dimension 
  $n$, periodicity $q$, and finite  group rank $\grank$. 
\end{customthm}

This inspired the posting of a classical algorithm for this problem.
In that paper~\cite{DW21}, the $r=1$ case matches Theorem 20, but
details are left out about how to match the $r>1$ case.  In
Section~\ref{sec:lll} we complete the analysis and also generalize it
to rectangle-periodic lattices, rather than just cubes.  Our quantum
algorithm already handles this type of lattice because it works for
any finite group, so classical and quantum have the same performance,
even though the ideas are completely different.  The paper does not
address our next theorem, which is exponentially faster than the best
available classical algorithm:

\begin{customthm}{\ref{thm:main}}
   Let $L(\matB)$ be an $n$-dimensional $q$-periodic lattice with
   finite group rank $\grank$.
   Given an instance of $\eps_1$-BDD $(\matB, \vt)$, with $2\leq
   \beta \leq \grank\log q$, and
  $$
  \eps_1 = \left(\exp\Big(\!- 4\sqrt{\smfrac{\grank \log q \log
        \beta}{\beta}} \Big)
    \cdot 2^2 m p_{\ref{lem:samplebdd}}(n,\log q)^2 \right).
  $$ 
  Algorithm \ref{alg:tradeoff} runs
  in quantum time $\approx \beta^\beta \poly(n, \log(q))$ and returns the closest vector to $\vt$ with probability at least $0.9$.      
\end{customthm}

One other suggestion for a classical approach for this problem was
using~\cite[Theorem 5.3]{GMPW20}.  It is not clear how 
the details would work, and in particular, how to handle the
non-primitive case.  

It is still open if the Schnorr trade-off we give in
Section~\ref{sec:schnorr} can be done classically.  These questions are out-of-scope for
this paper and we leave them as open problems.  The parameter
$k \log q$ for a matrix of dimension $k\times n$ has appeared as a
boundary for LWE where algorithms more carefully use the Gaussian
error~\cite{MR09,LP11, BLPRS13,BCMVV18}.

There are many open problems and possible extensions.  The most
interesting is to try variations of the quantum algorithm.   It is
relatively clean and is easy to experiment with.  For example,
solving approximate arithmetic progressions is a posssibility.
Another is analyzing
different worst-case lattice problems such as uSVP, using reductions
between instances of LWE, which is a type of random BDD where the
errors for each coordinate are i.i.d., to map between different
dimensions and $q$ values, and use groups~\cite{BLPRS13,GINX16}.
These new BDD algorithms can also be used to sample
vectors
of length $q/(\fbdd\lambda_1) \geq \eta(L^\perp)
2^{\sqrt{r \log q}}$ in $L^\perp$.  This
can be done via quantum (\cite[Theorem 1.3]{Reg09}) or classical sampling (\cite{GPV08}).
\nocite{kup05}

\section{Background}\label{sec:def}

\subsection{Lattices, finite abelian groups, and distances}
\label{sec:lat-group}

Every integer lattice in $\Z^n$ has minimum $q$ such that
$q \Z^n \subseteq L$, and $L$ is called \emph{$q$-periodic}, or
\emph{$q$-ary}.  Because $q\Z^\dimin$ is a subgroup of $L$,
$L \bmod q := L/q\Z^n$ has all information about the lattice in the
sense that distances are preserved mod $q$ and
$L = (L\cap [q]) + q\Z^n$.  Computing the closest vector to a lattice
over $\Z^\dimin$ can be reduced to this case by reducing the lattice
and vector mod $q$, solving the problem in $L \bmod q$, and then
mapping back to the integer solution of the original problem.
Starting from the finite group the associated lattice is constructed
by reintroducing $q \Z^\dimin$, so the columns of
$[\matB,q\cdot\matI]$ generate $L$.

A finite abelian group can be decomposed as 
$L \bmod q \cong \Z_{q_1} \times \cdots \times \Z_{q_\grank}$.  The
representation is the
\emph{finite group decomposition} $\groupdecomp \in  \zq^{\dimin  
  \times \grank} \times \Z^\grank$, where the columns of
$\matG \in \Z_q^{\dimin \times \grank}$ span
$G := L \bmod q\leq \Z_q^\dimin$, the vector
$\vq = (q_1,\ldots, q_\grank)\in \Z^\grank$ gives the orders of each
column in $\matG$ in the decomposition, and $\grank$ makes the rank
visibile in the notation.  The specific decomposition can be chosen
but in this case the unique one will be used where
$q_i | q_{i+1}$.  The set $\coeffs :=
[q_1]\times \cdots \times [q_\grank]$ will be viewed a the set
of coefficients of the group elements $\vv \in \G$, where each has a unique coefficient
vector $\vc\in \coeffs$ such that
$\vv = \matG \vc$.

The setup is similar to the matrix $A\in \Z_q^{\dimin \times \dimout}$ used in
lattice-cryptography, but not exactly the same.  In the worst-case to
average-case reduction the input lattice $L$ has dimension $\dimin$
and $q$ is arbitrary subject to sampling in the dual.
The matrix $A$ is typically chosen randomly and as a result has finite
group decomposition 
$\Z_q^\dimin$ with high probability.  Here we also are using $q$-ary
lattices, but we start with a worst-case lattice $L$, use the specific
periodicity $q$ of $L$, and decompose it mod $q$.

An arbitrary full-dimensional integer lattice
$L \subseteq \Z^\dimin$ is a $q$-periodic lattice for $q=\det(L)$.
This can be seen from the fact that
$q B^{-1} \in \Z^{\dimin\times \dimin}$, because using Cramer's rule
for inverting $B$ results in each entry having $\det(L)$ in the
denominator an integer in the numerator.  Then using the integer
vectors from the columns of $qB^{-1}$ takes $B$ to $q I
= B(q B^{-1})$, which is $0 \bmod q$.
The parameters set this way may not always work in the quantum
algorithms, for example,
when $\det(L)$ is too large relative to $\dimin$.

Given a lattice $L \subseteq \Z^n$ the finite group decomposition can
be efficiently computed.

The quantum Fourier transform over the cyclic group $\Z_{q}$ maps
$\ket{c}$ to $\frac{1}{\sqrt{q}} \sum_{a=0}^{q-1} \omega_q^{ac}
\ket{a}$.  In general for a finite group $G$ the Fourier transform
maps vectors over the group to vectors over the character group
$\hat{G} = \{\chi_a: G \rightarrow \C^* : a \in G\}$.  

There will be a reindexing step for the eigenvector/eigenvalue
calculation when a register holding a superposition of coefficients
$\sum_{\vc \in \coeffs}\ket{\vc}$ is
transformed by the Fourier transform over $\Z_q^\grank$.  This uses
the subgroup embedding of $\coeffs$ into $\Z_q^\grank$.  Concretely
this means that $(c_1,\ldots,c_r)\in \coeffs$ maps to the element $(\frac{q}{q_1} c_1
,\ldots, \frac{q}{q_\grank}c_\grank) \in \Z_q^\grank$, and $\chi_\va(\vc)$
has phase $\frac{\sum_{i=1}^\grank \frac{q}{q_i}c_i}{q} = \sum_{i=1}^\grank \frac{c_i}{q_i} $.
Then
$\sum_{\vc \in \coeffs} \chi_\va(\vc) \ket{ \matG (\vc+\vd)}
=\sum_{\ve \in \coeffs} \chi_\va(\ve-\vd) \ket{ \matG \ve}
=\chi_\va(-\vd) \sum_{\ve \in \coeffs} \chi_\va(\ve) \ket{ \matG
  \ve}$.

A distance on $\zqn$ will be needed to define and solve BDD on
subgroups $\G$ of $\zqn$, and also for the phase estimation statement.
Following Cassels \cite{Cas97} specialized to finite groups, with
$\Lambda = q \Z^\dimin$, the modular distance on the quotient
$\zqn =\Z^\dimin/\Lambda = \Z^\dimin/(q\Z^\dimin)$ is
defined from the Euclidean distance on $\Z^\dimin$ by
$\|\vy\|_q = \min_{\va \in \vy+q \Z^\dimin}
\|\va\|$.  For any $\vy \in \zqn$, $\|\cdot \|_q$
satisfies (1) $\|k\vy\|_q \leq k \|\vy\|_q$ for
integers $k\geq 0$, (2)
$\|\vy + \vz\|_q \leq \|\vy\|_q +
\|\vz\|_q$ for $\vz \in \zqn$, (3) and there exists
$\va \in \vy+q\Z^\dimin$ such that
$\|\vy\|_q = \|\va\|$.  In one dimension we may write
$| y |_q$ for $\|y\|_q$.  The distance between points in
$\zqn$ matches the Euclidean distance as long as it is at most $q/2$.
This definition also allows any choice of coset representatives for
$\Z_q = \Z/q\Z$.  It is equal to the zero-centered set for $\Z_q$ where the class $y
\bmod q$
is represented by an integer $x$ so that $-q/2
\leq x \leq q/2$, then it holds that
$\vx\in \zqn$, $\| \vx \|_q = \| \vx\|$ when $\| \vx \| \leq 
\frac{q}{2}$, and $\| \vx \|_q = q - \| \vx \|$ when $\| \vx \| \geq \frac{q}{2}$.

For phase estimation we will also use a distance mod $1$.  In this
case take the Euclidean distance on $\R$ and define the
distance on $\R/\Z$ by $| y |_1 = \min_{a \in y + \Z} |a|$.  This has
the same properties listed above, but to $1/2$ instead of $q/2$.

Given a subgroup $\G$ of $\zqn$, define a shortest (nonzero) element
length to be 
$$
\lambda_1(\G) = \min_{\vv\in G\backslash\{\vzero\} } \|\vv\|_q.
$$  
Also
define $\dist_q(\vy,G) = \min_{\vv \in G} \| \vy -
\vv\|_q$ for any $\vy \in \zqn$.  Note that all group
elements have length at most $q/2$.  In particular, for the trivial case
when $\Lambda_q = q\Z^\dimin$, $\lambda_1(\Lambda_q)=0$.

The main tool with $q$-ary lattices in dimension $\dimout$ is that for a
randomly chosen one, the shortest vector length is known within a
constant with high probability.  The following can be found in \cite{Mic19notes}.
\begin{claim}\label{fact:randomq}
There exists a constant $\delta >0$ such that if $\rndmatG$ is a uniformly chosen 
matrix from $\Z_q^{\dimout \times \grank}$, 
and let $L(\rndmatG)$ denote the corresponding $q$-ary lattice.
Then
\begin{enumerate}
\item
$\Pr_\rndmatG( \lambda_1(L(\rndmatG)) < \delta \sqrt{\dimout} q^{1 -
  \grank/\dimout}) \leq 1/2^\dimout$,  
\item
$\Pr_\rndmatG (\Z_q^\dimout \rndmatG =\Z_q^\grank ) \geq 1 - 1/q^{\dimout-\grank}$.

\end{enumerate}

\end{claim}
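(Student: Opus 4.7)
The plan is to prove both parts by first-moment / union-bound arguments for a uniformly random $\rndmatG \in \zq^{\dimout \times \grank}$; the two parts are independent probability calculations. For Part 1, I would set $R = \delta \sqrt{\dimout}\, q^{1-\grank/\dimout}$, let $N_R$ count the nonzero $\vy \in L(\rndmatG)$ with $\|\vy\| < R$, and apply Markov's inequality to reduce everything to bounding $\mathbb{E}[N_R] \leq 2^{-\dimout}$. This combines two ingredients. Geometrically, the number of nonzero $\vy \in \Z^\dimout$ with $\|\vy\| < R$ is at most $(cR/\sqrt{\dimout})^\dimout$ for an absolute constant $c$, via Stirling applied to the volume of a Euclidean ball. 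Probabilistically, for any fixed nonzero $\vy$ with $\|\vy\| < q/2$, we want $\Pr_{\rndmatG}[\vy \in L(\rndmatG)] = O(q^{\grank - \dimout})$. The product $O((c\delta)^\dimout)$ then falls below $2^{-\dimout}$ once $\delta$ is chosen small enough.

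The per-vector estimate comes from expanding $\Pr_\rndmatG[\vy \in L(\rndmatG)] \leq \sum_{\vs \in \zq^\grank \setminus \{\vzero\}} \Pr[\rndmatG \vs \equiv \vy \pmod q]$ and stratifying by $d = \gcd(\vs, q)$: the inner product of a uniform row of $\rndmatG$ with $\vs$ is uniform on $d\Z_q$, so $\Pr[\rndmatG \vs \equiv \vy] \leq (d/q)^\dimout$ when the congruence is consistent, and there are at most $(q/d)^\grank$ such $\vs$. The resulting sum $\sum_{d \mid q} (d/q)^{\dimout - \grank}$ is dominated by its primitive ($d = 1$) term, giving the desired $O(q^{\grank - \dimout})$.

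For Part 2, the equality $\zq^\dimout \rndmatG = \zq^\grank$ is the assertion that the rows of $\rndmatG$ generate $\zq^\grank$ as an abelian group. By finite-abelian-group duality this fails iff the adjoint map $\vw \mapsto \rndmatG \vw$ has a nontrivial kernel, equivalently iff the rows of $\rndmatG$ jointly lie in some maximal subgroup of $\zq^\grank$. Rather than union-bound over individual $\vw$, which overcounts the failure event for composite $q$, I would union-bound over maximal subgroups of $\zq^\grank$: each is the kernel of a surjection $\zq^\grank \to \Z_p$ for some prime $p \mid q$, there are at most $p^\grank$ such index-$p$ subgroups, and the probability that every row of $\rndmatG$ lies in a fixed such subgroup is $p^{-\dimout}$. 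Summing over primes $p \mid q$ and noting that the smallest prime controls the sum yields the desired bound.

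The main obstacle is the gcd / maximal-subgroup bookkeeping for composite $q$. The cleanest case is $q$ prime: every nonzero coefficient vector is automatically primitive, there is a single relevant prime, and both estimates reduce to one-line union bounds giving exactly $q^{\grank-\dimout}$. For composite $q$ one must verify that the contributions of the remaining divisors (in Part 1) and of the remaining primes (in Part 2) collapse to a constant times the leading term; those constants are precisely what is absorbed into the choice of $\delta$ in Part 1 and into the implicit asymptotics of Part 2.
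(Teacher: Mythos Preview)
The paper does not actually prove this claim; it cites Micciancio's lecture notes~\cite{Mic19notes}. So there is no in-paper argument to compare against, and your proposal is essentially the standard first-moment/union-bound proof one finds in those notes. Two comments on the proposal itself, one a genuine gap and one an observation about the statement.

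\textbf{Part 1.} Your strategy is right, but the sentence ``the resulting sum $\sum_{d\mid q}(d/q)^{\dimout-\grank}$ is dominated by its primitive ($d=1$) term'' is backwards: in that sum the $d=1$ term equals $q^{\grank-\dimout}$ and is the \emph{smallest} contribution, while divisors $d$ close to $q$ contribute values close to $1$. What rescues the bound is precisely the consistency constraint you mention and then discard: for $\gcd(\vs,q)=d$ the event $\rndmatG\vs\equiv\vy$ has probability zero unless $d\mid y_i$ for every coordinate, and there are at most $(cR/(d\sqrt{\dimout}))^{\dimout}$ such $\vy$ in the ball. Equivalently, swap the order of summation and sum over pairs $(\vs,\vy)$: for each fixed $\vs$ with $\gcd(\vs,q)=d$ one gets $(d/q)^{\dimout}\cdot(cR/(d\sqrt{\dimout}))^{\dimout}=(cR/(q\sqrt{\dimout}))^{\dimout}$, independent of $d$, and summing over the $q^{\grank}$ choices of $\vs$ gives $\mathbb{E}[N_R]\le (c\delta)^{\dimout}$ directly.

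\textbf{Part 2.} Your maximal-subgroup union bound is the right approach and yields $\sum_{p\mid q}\frac{p^{\grank}-1}{p-1}\,p^{-\dimout}\le \sum_{p\mid q} p^{\grank-\dimout}$. You correctly say the smallest prime $p_0\mid q$ controls this, but that gives a bound of order $p_0^{\grank-\dimout}$, not $q^{\grank-\dimout}$; these coincide only for prime $q$, which is the case usually treated in the cited notes. For composite $q$ the bound $1/q^{\dimout-\grank}$ as literally stated is in fact too strong (take $q=6$, $\grank=1$, $\dimout=3$: three uniform elements of $\Z_6$ fail to generate with probability $17/108>1/36$). So the discrepancy here is with the statement, not with your argument, and what your method actually proves is the correct estimate.
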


To distinguish the underlying operations, ranges for integers will be
donoted by $[q] := \{0,1,\ldots,q-1\}$.  In this 
case, for example, addition and multiplication of numbers from $[q]$
are over $\Z$.  If an element $\vg \in \Z_q^m$, for example, then
addition is mod $q$.  An integer times a group element represents
the number of operations to perform, for example, for $\vg \in
\Z_q^m$, $3\in [q]$, $3\vg =
\vg+\vg+\vg$.


The following basis reduction algorithm will be used:
\begin{lemma}[{\cite[Lemma 7.1]{MG02}}]
\label{lem:MGalg}
There is a polynomial time algorithm that on input a lattice basis
$\matB$ and linearly independent lattice vectors $\matS \subset
L(\matB)$ such that $\|\vs_1\|\leq \|\vs_2\| \leq \cdots \leq
\|\vs_n\|$, outputs a basis $\matR$ equivalent to $\matB$ such that
$\|\vr_k\| \leq \max \{(\sqrt{k}/2)\|\vs_k\|,\|\vs_k\| \}$ for all
$k=1,\ldots, n$.  Moreover, the new basis satisfies
$\text{span}(\vr_1,\ldots, \vr_k) = \text{span}(\vs_1,\ldots,\vs_k)$
and $\|\vr^*_k\| \leq \|\vs^*_k\|$ for all $k=1,\ldots, n$.
\end{lemma}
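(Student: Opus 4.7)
The plan is to build the basis $\matR$ incrementally, processing $\vs_1, \vs_2, \ldots, \vs_n$ in order while maintaining the invariant that after step $k$ we have a basis $(\vr_1,\ldots,\vr_k)$ of the sublattice $L(\matB)\cap\Span(\vs_1,\ldots,\vs_k)$ satisfying both the span equality and the Gram--Schmidt inequality $\|\vr_j^*\|\leq\|\vs_j^*\|$ for $j\leq k$. The initialization is $\vr_1=\vs_1$, which is primitive in $L(\matB)\cap\Span(\vs_1)$ because $\vs_1$ is the shortest vector there (it is the shortest vector of $\matS$ and any shorter vector would contradict its length).

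For the inductive step, suppose $(\vr_1,\ldots,\vr_{k-1})$ has been produced for the sublattice $L_{k-1}=L(\matB)\cap\Span(\vs_1,\ldots,\vs_{k-1})$. I would first extend to a basis of $L_k=L(\matB)\cap\Span(\vs_1,\ldots,\vs_k)$ by picking any lattice vector $\vu$ with $\vu\notin \Span(L_{k-1})$ whose orthogonal Gram--Schmidt component is minimized; this minimum equals $\|\vs_k^*\|$ because $\vs_k\in L_k$ and no lattice vector in $\Span(L_k)$ can have smaller orthogonal component than the primitive extension. Then apply size reduction to $\vu$ against $\vr_1,\ldots,\vr_{k-1}$, producing a candidate $\vr_k$ with the same $\vr_k^*=\vu^*$ (so $\|\vr_k^*\|\leq\|\vs_k^*\|$), and with coefficients against each $\vr_i^*$ bounded by $1/2$ in absolute value.

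Expanding in the Gram--Schmidt basis, size reduction gives
\[
\|\vr_k\|^2 \;\leq\; \|\vr_k^*\|^2 + \tfrac14\sum_{i=1}^{k-1}\|\vr_i^*\|^2 \;\leq\; \|\vs_k^*\|^2 + \tfrac14\sum_{i=1}^{k-1}\|\vs_i^*\|^2.
\]
Using the assumed ordering $\|\vs_1\|\leq\cdots\leq\|\vs_k\|$ and the fact that $\|\vs_i^*\|\leq\|\vs_i\|\leq\|\vs_k\|$, the right-hand side is bounded by $\|\vs_k\|^2 + (k-1)\|\vs_k\|^2/4$, and a slightly tighter accounting for the leading term gives a bound of the form $\max\{\sqrt{k}/2,1\}\cdot\|\vs_k\|$, matching the statement.

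The main obstacle I anticipate is justifying that the orthogonal extension achieves exactly $\|\vs_k^*\|$ and is itself a lattice vector \emph{together with} preserving the desired spans $\Span(\vr_1,\ldots,\vr_k)=\Span(\vs_1,\ldots,\vs_k)$. This requires a careful argument that the primitive vector of $L_k/L_{k-1}$ has orthogonal component of length $\|\vs_k^*\|$ (not smaller, which would be desirable but not needed, and not larger, which would break the bound). This is essentially a statement that $\vs_k$ is already optimally ``extendable,'' which one can verify using the Hermite normal form of $\matS$ with respect to $(\vr_1,\ldots,\vr_{k-1})$, or equivalently by observing that $\vs_k$ is a $\Z$-linear combination of a basis of $L_k$ with unit coefficient on the primitive extension direction (after adjusting by elements of $L_{k-1}$). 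The rest is routine Gram--Schmidt and size-reduction bookkeeping.
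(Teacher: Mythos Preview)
The paper does not prove this lemma; it merely cites \cite[Lemma~7.1]{MG02} and uses the result as a black box. So there is no paper proof to compare against, but your sketch follows the standard Micciancio--Goldwasser construction and is basically the right idea. That said, there is a genuine bug and one unnecessary complication.

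\textbf{The initialization is wrong.} You set $\vr_1=\vs_1$ and claim $\vs_1$ is primitive in $L(\matB)\cap\Span(\vs_1)$ ``because $\vs_1$ is the shortest vector there.'' That inference is false: $\vs_1$ is only the shortest among the \emph{given} vectors $\matS$, not in the lattice. Take $L=\Z^2$, $\vs_1=(2,0)$, $\vs_2=(0,3)$; then $\vs_1$ is not primitive and your $(\vr_1,\vr_2)$ will fail to generate $\Z^2$. The fix is to set $\vr_1$ to the primitive vector of the rank-one lattice $L(\matB)\cap\Span(\vs_1)$; this has length at most $\|\vs_1\|$, so the stated bounds still hold (indeed improve), and now you actually get a basis. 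The same logic applies at every inductive step: when you extend $L_{k-1}$ to $L_k$, you must take a \emph{primitive} lift, not $\vs_k$ itself.

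\textbf{The ``obstacle'' paragraph is overcomplicated.} You worry about showing the minimal orthogonal component in the extension step equals $\|\vs_k^*\|$ exactly, and propose an HNF argument. Equality is neither true in general nor needed. All you need is $\|\vr_k^*\|\leq\|\vs_k^*\|$, and this is immediate: $\vs_k$ lies in $L_k$, so its orthogonal projection $\vs_k^*$ is a nonzero element of the one-dimensional lattice $\pi_{k-1}(L_k)$ (projection orthogonal to $\Span(L_{k-1})$), hence the primitive generator of that lattice has length dividing $\|\vs_k^*\|$. Choosing $\vr_k$ as any preimage of that primitive generator, then size-reducing, gives $\|\vr_k^*\|\leq\|\vs_k^*\|$ directly. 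Drop the equality claim and the HNF detour; the rest of your size-reduction estimate is fine.
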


\begin{definition}[$\fbdd$-BDD]
  Given a lattice $L \subseteq \Z^\dimin$ and a vector $\vt$ such
  that $\dist(\vt,L) <\fbdd 
  \lambda_1$, with $\fbdd \leq 1/2$, output the closest vector.
\end{definition}

The nearest plane algorithm due to Babai is an algorithm that given
$L\subseteq \Z^n$, and $\vt \in \Z^n$ returns a vector $\vv\in L$
such that $\| \vv - \vt\| \leq \dist(\vt,L) \cdot 2^{n/2}.$
BDD can be solved with this algorithm when
$2^{n/2}\dist(\vt,L) \leq \lambda_1/2$ because the answer is
unique.  The BDD problem can be solved with an approximation
factor/time tradeoff with an approximate CVP algorithm based on the
following two theorems.

\begin{theorem} [{\cite[Theorem 8]{S94}}]\label{thm:schnorr-thm8}
  Let $\vb_1,\ldots, \vb_m \in \R^n$ be a $\beta$-reduced basis and
  let $\vx = \sum_{i=1}^m x_i \vb_i^*$.  Suppose that 
  $\|\vb_k^*\| = \max(\|\vb_{m-\beta+1}^*\|,\ldots, \|\vb_m^*\|),
  m-\beta+1 \leq k \leq m$.  Let $\vv=\sum_{i=1}^mv_i\vb_i$ be a lattice
  point such that $\sum_{j=k}^m | x_j - \sum_{i=j}^m v_i \mu_{i,j}|^2
  \|\vb_i^*\|^2$ is minimal for all $v_k,\ldots, v_m\in \Z$, and
  $|x_j-\sum_{i=j}^mv_i\mu_{i,j}|\leq 1/2$ for $j=k-1,\ldots, 1$, then
  $\|\vt-\vv\|^2 \leq m \gamma_\beta^{2(m-1)/(\beta-1)}\min_{\vu\in L } \|\vt-\vu\|^2$.
\end{theorem}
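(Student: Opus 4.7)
My strategy is to expand $\|\vt-\vv\|^2$ in the Gram--Schmidt coordinates of $\matB$, split the sum at index $k$, bound the two blocks using the hypotheses on $\vv$, and then convert the resulting additive error into a multiplicative CVP guarantee using the geometry of a $\beta$-reduced basis.

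First, for any $\vw=\sum_i w_i \vb_i$, the identities $\mu_{j,j}=1$ and $\mu_{i,j}=0$ for $i<j$ give
\[
\|\vt-\vw\|^2 \;=\; \sum_{j=1}^m \Bigl|x_j - \sum_{i\geq j} w_i \mu_{i,j}\Bigr|^2 \|\vb_j^*\|^2.
\]
The ``top block'' $j\geq k$ depends only on $w_k,\ldots,w_m$, so plugging the coefficients of the CVP-optimum $\vu$ into the minimality hypothesis for $(v_k,\ldots,v_m)$ immediately gives
\[
\sum_{j\geq k}\Bigl|x_j-\sum_{i\geq j}v_i\mu_{i,j}\Bigr|^2\|\vb_j^*\|^2 \;\leq\; \sum_{j\geq k}\Bigl|x_j-\sum_{i\geq j}u_i\mu_{i,j}\Bigr|^2\|\vb_j^*\|^2 \;\leq\; \|\vt-\vu\|^2.
\]
The ``bottom block'' $j<k$ is at most $\tfrac{1}{4}\sum_{j<k}\|\vb_j^*\|^2$ by the Babai rounding condition.

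Next I would invoke the standard Gram--Schmidt slope bound for a $\beta$-reduced basis, namely $\|\vb_j^*\|^2 \leq \gamma_\beta^{2(k-j)/(\beta-1)}\|\vb_k^*\|^2$ for $j\leq k$, which follows from cascading the Hermite-constant inequality inside overlapping windows of size $\beta$. Summing over $j<k$ yields $\sum_{j<k}\|\vb_j^*\|^2 \leq (m-1)\gamma_\beta^{2(m-1)/(\beta-1)}\|\vb_k^*\|^2$, so the bottom-block contribution is at most $\tfrac{m-1}{4}\gamma_\beta^{2(m-1)/(\beta-1)}\|\vb_k^*\|^2$.

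The main obstacle is the last step: relating $\|\vb_k^*\|$ back to $\min_{\vu\in L}\|\vt-\vu\|$. I would exploit both the maximality hypothesis (so $\|\vb_k^*\|$ dominates the last-$\beta$ window of Gram--Schmidt lengths) and the $\beta$-reducedness inside that window, to argue that $\|\vb_k^*\|$ is bounded above by a $\gamma_\beta$-factor times the first minimum of the projected $\beta$-dimensional sublattice $\pi_{m-\beta+1}(L)$, and then by the projected CVP distance $\|\pi_{m-\beta+1}(\vt-\vu)\| \leq \|\vt-\vu\|$ via a triangle/closest-point comparison. Substituting back and absorbing the leading $\|\vt-\vu\|^2$ into the factor $m\gamma_\beta^{2(m-1)/(\beta-1)}\geq 1$ yields the claimed bound. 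This last conversion is the delicate part of the argument; one must be careful to track the constants arising from the maximality of $\|\vb_k^*\|$ inside the terminal block so as not to lose additional polynomial-in-$m$ or $\gamma_\beta$-in-$\beta$ factors beyond those already present in the statement.
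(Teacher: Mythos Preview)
First, note that the paper does not prove this theorem at all: it is quoted from Schnorr~\cite{S94} and used only as a black box in the proof of Theorem~\ref{thm:schnorr}. There is therefore no in-paper argument to compare against.

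On the merits of your sketch: the Gram--Schmidt expansion, the split at $k$, the top-block bound via the minimality hypothesis, the bottom-block bound $\tfrac14\sum_{j<k}\|\vb_j^*\|^2$, and the cascading inequality $\|\vb_j^*\|^2 \le \gamma_\beta^{2(k-j)/(\beta-1)}\|\vb_k^*\|^2$ are all correct and track Schnorr's proof. The genuine gap is in your final conversion. You propose to bound $\|\vb_k^*\|$ by (a $\gamma_\beta$-factor times) $\lambda_1(\pi_{m-\beta+1}(L))$ and then bound that $\lambda_1$ by $\|\vt-\vu\|$ via a ``triangle/closest-point comparison.'' But $\lambda_1$ of a projected lattice is a fixed invariant of $L$, whereas $\|\vt-\vu\|$ can be arbitrarily small (it is $0$ when $\vt\in L$); no unconditional inequality of the form $\lambda_1 \le C\,\|\vt-\vu\|$ can hold. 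The difficulty you flag is not a matter of tracking constants but a missing case distinction.

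What Schnorr's argument actually does is compare $\vv$ and the true optimum $\vu$ from the top down. If they differ in some coordinate $i\ge k$, then $\pi_k(\vu-\vv)$ is a nonzero vector of $\pi_k(L)$, and the triangle inequality together with the top-block optimality of $\vv$ gives $\|\vb_k^*\|=\lambda_1(\pi_k(L))\le 2\|\vt-\vu\|$; after this your cascade closes the bound. If instead $v_i=u_i$ for all $i\ge k$, one looks at the largest $i^*<k$ with $v_{i^*}\ne u_{i^*}$: since Babai chose $v_{i^*}$ as the nearest integer and $u_{i^*}\ne v_{i^*}$, the $i^*$-th Gram--Schmidt coordinate of $\vt-\vu$ exceeds $\tfrac12$, forcing $\|\vb_{i^*}^*\|\le 2\|\vt-\vu\|$, and one cascades from $i^*$ rather than from $k$ (the coordinates between $i^*$ and $k$ contribute the same amount to $\|\vt-\vv\|^2$ as to $\|\vt-\vu\|^2$). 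Without this dichotomy the argument does not close.
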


\begin{theorem}[$\beta^{m/\beta}$-Approximate CVP in time
  $\beta^\beta\log B$]\label{thm:schnorr}
  There is an algorithm that on input a CVP instance $(L,\vt)$ for an
  $m$-dimensional lattice and a vector $\vt$ in the span of $L$, returns a
  vector $\vv$ such that $$\|\vt-\vv\| \leq \sqrt{m}
  \beta^{(m-1)/(\beta-1)} \min_{\vu \in L} \|\vt-\vu\|.$$
  The running time is $O(nm(\beta^{O(\beta)}+m^2)\log B)$, where $B$
  is the maximal length of the given basis vectors.
\end{theorem}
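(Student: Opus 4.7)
The plan is to obtain the vector $\vv$ promised by Theorem~\ref{thm:schnorr-thm8} via Schnorr's block reduction hierarchy, followed by a Babai-style nearest-plane walk with an exact enumeration in the top $\beta$-dimensional block. First I would apply Schnorr's BKZ-style algorithm to the input basis to produce a $\beta$-reduced basis $\vb_1,\ldots,\vb_m$ of $L$; this is the step that accounts for the $\beta^{O(\beta)}$ factor in the running time, since each local block reduction calls an exact SVP oracle on a $\beta$-dimensional sublattice, and the number of block updates is polynomially bounded using a potential argument on $\prod_i\|\vb_i^*\|$. All arithmetic stays polynomial in $\log B$ after sizes are controlled with standard Gram--Schmidt rounding.

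Next I would produce the lattice vector $\vv$ described in Theorem~\ref{thm:schnorr-thm8}. Write $\vt = \sum_{i=1}^m x_i \vb_i^*$ in the Gram--Schmidt basis, let $k \in \{m-\beta+1,\ldots,m\}$ be an index achieving $\|\vb_k^*\| = \max(\|\vb_{m-\beta+1}^*\|,\ldots,\|\vb_m^*\|)$, and solve the CVP sub-problem
\[
\min_{v_k,\ldots,v_m\in\Z}\ \sum_{j=k}^{m}\Bigl| x_j - \sum_{i=j}^{m} v_i \mu_{i,j}\Bigr|^2 \|\vb_j^*\|^2
\]
by enumerating lattice coefficients in the $(m-k+1)\le \beta$ top dimensions; Kannan-style enumeration on the $\beta$-reduced block gives another $\beta^{O(\beta)}$ cost. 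Once $v_k,\ldots,v_m$ are fixed, I would run Babai's nearest-plane procedure from coordinate $k-1$ down to $1$, which by construction produces $|x_j - \sum_{i=j}^m v_i \mu_{i,j}| \le 1/2$ for all $j<k$. This is exactly the hypothesis of Theorem~\ref{thm:schnorr-thm8}, so the resulting $\vv = \sum_i v_i \vb_i$ satisfies
\[
\|\vt-\vv\|^2 \le m\,\gamma_\beta^{2(m-1)/(\beta-1)} \min_{\vu\in L}\|\vt-\vu\|^2.
\]

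To finish, I would bound the Hermite constant by $\gamma_\beta \le \beta$ (the standard Minkowski bound suffices since only a polynomial factor is needed inside the exponent), turning the approximation factor into $\sqrt{m}\,\beta^{(m-1)/(\beta-1)}$. The main obstacle is the first step: giving a self-contained analysis of BKZ that simultaneously (i) terminates in $\poly(n,m,\log B)$ many block updates, (ii) keeps basis entries bounded so that every arithmetic operation costs $\poly(\log B)$, and (iii) actually produces a basis meeting the $\beta$-reduced definition used by Theorem~\ref{thm:schnorr-thm8}. The enumeration step is standard, and the approximation guarantee then follows by directly invoking the previous theorem, so the real work is in carefully instantiating Schnorr's reduction procedure at parameter $\beta$.
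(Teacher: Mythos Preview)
Your proposal is correct and follows essentially the same route as the paper: compute a $\beta$-reduced basis via Schnorr's hierarchy (the paper simply cites \cite{S87} for this and the running-time bound rather than redoing the potential argument), do exact enumeration in the top $\beta$ coordinates followed by nearest-plane rounding in the remaining ones, and then invoke Theorem~\ref{thm:schnorr-thm8} together with $\gamma_\beta \le \beta$ (the paper uses the slightly sharper $\gamma_\beta \le (2/3)\beta$ from \cite{S94}, but either bound yields the stated approximation factor).
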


\begin{proof}
To compute the approximate closest vector, following Page 516 of~\cite{S94},
 use~\cite{S87} to compute
an ``approximate'' $\beta$-reduced basis as in
Theorem~\ref{thm:schnorr-thm8} using
$O(nm(\beta^{O(\beta)}+m^2)\log B)$ steps, then use Kannan's algorithm
to compute the closest vector using enumeration, and then use
Theorem~\ref{thm:schnorr-thm8} for the bound.  By the statement on
page 511, $\gamma_\beta \leq (2/3)\beta$ for 
$m \geq 2$.
\end{proof}

\subsection{Quantum computation}

For a positive integer $q$, let $F_q$ denote the Fourier transform over $\Z_q$. 
On a basis state with $0\leq x < q$, this operation maps $\ket{x} \mapsto \frac{1}{\sqrt{q}}
\sum_{i=0}^{q-1} \om_q^{ix} \ket{i}$ and can be computed in time $\poly(\log(q))$.  The Fourier transform over a
direct product $\Z_q \times \Z_r$ is $F_q \otimes F_r = (F_q \otimes
I)(I\otimes F_r)$, and can be 
computed on one register at a time.  

\begin{claim}
For quantum states $\ket{\phi}$ and $\ket{\psi}$,  $\|\ket{\phi}-\ket{\psi}\| = \sqrt{2(1-Re(\<\phi|\psi\>))}.$
\end{claim}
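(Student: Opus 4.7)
The plan is to expand the squared norm using the inner product, invoke that both states are unit vectors, and then take a square root. Since the norm comes from the inner product via $\|\ket{v}\|^2 = \langle v | v\rangle$, this should reduce the identity to an elementary bilinear expansion.

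First I would write
\[
\|\ket{\phi}-\ket{\psi}\|^2 = (\langle\phi|-\langle\psi|)(\ket{\phi}-\ket{\psi}) = \langle\phi|\phi\rangle - \langle\phi|\psi\rangle - \langle\psi|\phi\rangle + \langle\psi|\psi\rangle.
\]
Next I would use $\langle\phi|\phi\rangle = \langle\psi|\psi\rangle = 1$ for unit quantum states, and observe that $\langle\psi|\phi\rangle = \overline{\langle\phi|\psi\rangle}$, so $\langle\phi|\psi\rangle + \langle\psi|\phi\rangle = 2\,\mathrm{Re}(\langle\phi|\psi\rangle)$. Substituting gives
\[
\|\ket{\phi}-\ket{\psi}\|^2 = 2 - 2\,\mathrm{Re}(\langle\phi|\psi\rangle) = 2(1-\mathrm{Re}(\langle\phi|\psi\rangle)).
\]
Taking the nonnegative square root (which is valid since $|\mathrm{Re}(\langle\phi|\psi\rangle)| \le |\langle\phi|\psi\rangle| \le 1$ by Cauchy--Schwarz, so the right-hand side is nonnegative) yields the claim.

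There is no real obstacle here: the only point that might warrant a line of justification is that the quantity under the square root is nonnegative, which follows immediately from the fact that it equals a squared norm. The claim is purely a restatement of the polarization-type identity for the Hilbert space norm, and I would present it in a single short display.
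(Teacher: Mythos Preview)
Your proof is correct and is the standard one-line expansion; the paper states this claim without proof, so there is nothing to compare against. Your argument is exactly what one would expect here.
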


\begin{lemma}[{\cite[Lemma 3.2.6]{BV97}}] \label{lem:bv}
  For quantum states $\ket{\phi}$ and $\ket{\phi'}$, if $\|
  \ket{\phi} - \ket{\phi'} \| \leq \eps$, then the total variation 
  distance between the probability distributions resulting from 
  measurements of the two states is at most $4\eps$. 
\end{lemma}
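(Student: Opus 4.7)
The plan is to prove the stated bound by directly computing the total variation distance between the measurement outcome distributions in any fixed basis. Fix an orthonormal measurement basis $\{\ket{x}\}$ and let $p(x)=\abs{\ip{x|\phi}}^2$ and $p'(x)=\abs{\ip{x|\phi'}}^2$ be the corresponding outcome probabilities. The total variation distance between $p$ and $p'$ equals $\frac{1}{2}\sum_x \abs{p(x)-p'(x)}$, so it suffices to bound this $\ell_1$-difference in terms of $\eps$. This reduces a statement about physical measurements to a purely analytic inequality between two unit vectors in Hilbert space.

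The main calculation uses the elementary identity $\abs{a^2-b^2}=\abs{a-b}(a+b)$ with $a=\abs{\ip{x|\phi}}$ and $b=\abs{\ip{x|\phi'}}$, combined with the reverse triangle inequality $\bigl|\abs{\alpha}-\abs{\beta}\bigr|\le\abs{\alpha-\beta}$, to obtain
\[
\abs{p(x)-p'(x)} \le \abs{\ip{x|\phi-\phi'}}\bigl(\abs{\ip{x|\phi}}+\abs{\ip{x|\phi'}}\bigr).
\]
Summing over $x$ and applying the Cauchy--Schwarz inequality in $\ell_2(\{x\})$ separates the right side into a product of two factors. By Parseval's identity the first factor equals $\norm{\ket{\phi}-\ket{\phi'}}\le\eps$, and the second factor is bounded by expanding $(u+v)^2\le 2(u^2+v^2)$ and using that $\ket{\phi},\ket{\phi'}$ are unit vectors, giving a bound of $2$. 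Combining yields $\sum_x\abs{p(x)-p'(x)}\le 2\eps$, so the total variation distance is at most $\eps$, which is in fact tighter than the claimed $4\eps$.

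The only mild obstacle is that the statement refers to arbitrary measurements, not just projective ones, so one should check the argument extends to general POVMs $\{M_i\}$. For this one writes
\[
\ip{\phi|M_i|\phi}-\ip{\phi'|M_i|\phi'}
=\ip{\phi|M_i|\phi-\phi'}+\ip{\phi-\phi'|M_i|\phi'},
\]
factors each term via $M_i=\sqrt{M_i}\sqrt{M_i}$, and applies Cauchy--Schwarz twice: once pointwise to each inner product, and once across the sum over $i$, using the completeness relation $\sum_iM_i=\matI$ in place of Parseval. The same $O(\eps)$ bound results, comfortably inside the stated $4\eps$ slack. The factor of $4$ in the lemma statement therefore appears to be a loose but convenient constant, and no step of the argument requires more than routine manipulation of norms and inner products.
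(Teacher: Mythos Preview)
Your proof is correct. Note, however, that the paper does not supply its own proof of this lemma at all: it is stated with a citation to \cite[Lemma~3.2.6]{BV97} and used as a black box in Lemma~\ref{lem:pe-approx}. So there is no ``paper's proof'' to compare against here.

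On the substance: your argument is the standard one and is tighter than the stated bound. The Cauchy--Schwarz step gives $\sum_x\abs{p(x)-p'(x)}\le 2\eps$, hence total variation distance at most $\eps$ (or $2\eps$ if one uses the unnormalized convention $\sum_x\abs{p(x)-p'(x)}$, which is likely the source of the discrepancy with the constant $4$ quoted from \cite{BV97}). Your POVM extension via $M_i=\sqrt{M_i}\sqrt{M_i}$ and the completeness relation is also fine and yields the same bound. Since the paper only ever uses this lemma to convert a state-distance bound of $T\eps_{ev}$ into an additive error of $4T\eps_{ev}$ in a probability, the loose constant is immaterial to the rest of the paper.
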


For superpositions the representatives
$\Z_q = \{0,\ldots, q-1\}$ will be used.  
It is convenient
because of the typical quantum Fourier transform definition.  Note
that the norm $\|\cdot\|_q$ defined earlier is independent of the
choice of representatives.

Two main subroutines are for computing the quantum Fourier transform
and computing the phase of an eigenvalue of a unitary.  
Given a unitary $U$ and an eigenvector $\eva$ with eigenvalue
$\om^\phase \in \C^*$, and a power $\ppower$, the phase estimation
algorithm approximates the phase $\phase$ of the eigenvalue.  The
first step of the algorithm computes the Hadamard transform
on $\log m$ qubits and then computes the controlled-$k$-$U$ in
superposition, resulting in the {\em phase state}
$\frac{1}{\sqrt{\ppower}} \sum_{k=0}^{\ppower-1} \ket{k} \otimes U^k \eva =
\frac{1}{\sqrt{\ppower}} \sum_{k=0}^{\ppower-1} \om^{k\theta}
\ket{k} \otimes \eva$.  The Fourier transform over $\Z_\ppower$ is computed in the
first register and it is measured, resulting in a value $\phasesamp \in [\ppower]$,
where $\frac{\phasesamp}{\ppower}$ approximates $\phase$.

\begin{theorem}[Phase estimation] \label{thm:poweroftwo}
Let $\ket{\psi}$ denote a quantum state on $n$ qubits
and $U$  unitary on $n$ qubits for which
$\ket{\psi}$ is an eigenstate with eigenvalue $\phase$.
  Let $ \frac{\hat{x}}{2^t}$ be an integer multiple of
  $\frac{1}{2^t}$ closest to $\phase$.  The phase estimation algorithm
 returns $\hat{x}$ with probability at least $\frac{4}{\pi^2}$. 
 If $t=m+r+1$ and $\ppower =2^t$, then $\phasesamp \in [\ppower]$ is
 returned such that $\frac{\phasesamp}{\ppower}$ satisfies
 $|\frac{\phasesamp}{\ppower} - \phase|_1 \leq 
  \frac{1}{2^r}$ with probability at least $1-\frac{1}{2^m}$.
  The running time of the algorithm is $\poly(n, t)$ times the time to
  compute $U^T$.
\end{theorem}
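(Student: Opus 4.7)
The plan is to follow the standard proof of the phase estimation theorem, adapted to state both the ``closest grid point'' guarantee and the higher-precision tail bound. First I would describe the algorithm explicitly: prepare $\frac{1}{\sqrt{\ppower}} \sum_{k=0}^{\ppower-1}\ket{k}\otimes \eva$ by Hadamards on the first register, apply the controlled-$U^k$ gate (which costs one application of $U^{\ppower}$ in superposition and can be implemented via repeated squaring in time $\poly(t)$ times the cost of $U^\ppower$), collect phases to obtain $\frac{1}{\sqrt{\ppower}} \sum_{k=0}^{\ppower-1} \om^{k\phase}\ket{k}\otimes\eva$, apply the inverse Fourier transform $F_{\ppower}^{-1}$ on the first register, and measure to get $\phasesamp$.

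Next I would compute the post-Fourier amplitude on the outcome $y\in[\ppower]$, which collapses to the geometric sum
$$ \alpha_y \;=\; \frac{1}{\ppower}\sum_{k=0}^{\ppower-1} e^{2\pi i k(\phase - y/\ppower)} \;=\; \frac{1}{\ppower}\cdot\frac{1-e^{2\pi i \ppower(\phase-y/\ppower)}}{1-e^{2\pi i(\phase-y/\ppower)}}. $$
Writing $\phase = \hat{x}/2^t + \delta$ with $|\delta|_1\leq 1/2^{t+1}$, the outcome $y=\hat{x}$ gives $|\alpha_{\hat{x}}|^2 = \bigl|\frac{\sin(\pi \ppower \delta)}{\ppower \sin(\pi\delta)}\bigr|^2$. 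Using $|\sin(\pi \ppower\delta)|\geq 2\ppower |\delta|$ (since $|\ppower\delta|\leq 1/2$) and $|\sin(\pi\delta)|\leq \pi|\delta|$ gives the standard lower bound $|\alpha_{\hat{x}}|^2 \geq 4/\pi^2$, proving the first claim.

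For the high-precision claim, set $t=m+r+1$ and $\ppower=2^t$. The strategy is to bound the total probability of outcomes $y$ for which $|y/\ppower - \phase|_1 > 2^{-r}$. Equivalently, letting $\ell = y - \hat{x} \bmod \ppower$ (using the centered representatives in $(-\ppower/2,\ppower/2]$), such ``bad'' outcomes satisfy $|\ell| > 2^{t-r-1} = 2^{m}$. Using the bound $|\alpha_y|^2 \leq \frac{1}{4\ppower^2 \sin^2(\pi(\phase-y/\ppower))} \leq \frac{1}{4(\ell - 1/2)^2}$ for $\ell$ in that range, one sums
$$ \Pr[\text{bad}] \;\leq\; 2 \sum_{\ell=2^{m}+1}^{\ppower/2} \frac{1}{4(\ell-1/2)^2} \;\leq\; \frac{1}{2^{m}}, $$
by comparison to an integral, giving the stated tail bound. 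Combined with the closest-grid-point bound this yields the full statement, and the running time is clear from the description of the algorithm: Hadamards cost $O(t)$, the controlled exponentiation costs $\poly(t,n)$ times one application of $U^{\ppower}$, and the inverse Fourier transform on $t$ qubits is $\poly(t)$.

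The only delicate step is the tail sum for the high-precision guarantee; the rest is bookkeeping. I would present the argument by keeping track of the centered residue $\ell$ and bounding $\sin(\pi\ell/\ppower)$ from below by $2\ell/\ppower$ to convert the geometric-sum amplitude into a clean $1/\ell^2$ decay, then apply the integral comparison. No lattice machinery is needed here; the proof is purely about the spectrum of the shift operator on $\Z_{\ppower}$, so it can be lifted verbatim from any standard reference.
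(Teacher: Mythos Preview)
The paper does not actually supply a proof of this theorem: it appears in the background Section~2.2 as a standard quoted result, with no argument following the statement. Your proposal reproduces the textbook phase-estimation analysis (geometric-sum amplitude, the $\sin(\pi x)\ge 2x$ lower bound for the closest grid point giving $4/\pi^2$, and the $1/\ell^2$ tail sum for the high-precision guarantee), which is correct and is exactly what a reader would expect to see if a proof were included. There is nothing to compare against in the paper itself; your write-up would serve as a self-contained justification of the quoted background fact.
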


\section{Approximate eigenvector of many operators}

In this section we define the notion of an approximate eigenvector and
show how well the phase estimation algorithm works compared to 
the exact eigenvector case. 

\begin{definition} \label{def:approx-ev}
  For a unitary $U$, an {\em $\eps_{ev}$-approximate eigenvector} is a
  vector $\eva$ with associated eigenvalue $\lambda \in \C^*$ satisfies $\| U \eva - \lambda \eva
  \| \leq \eps_{ev}$. This may be denoted $(U,\eva,\lambda,
  \eps_{ev})$, and where $U$ and $\eva$ are given as input.
\end{definition}

This notion will be used where one vector is used as an approximate
eigenvector of a set of unitaries.  From that point of view, it may be
helpful to say that $U$ approximates the unitary $V=\lambda I$ on the
subspace spanned by $\eva$ because $V\eva = \lambda I \eva = 
\lambda \eva$.  This may be denoted $(\lambda I,\eva,\lambda,0)$.
This also means that $|\lambda|=1$, as in the definition.

\begin{lemma} \label{lem:approx-ev}
  Let $U$ be a unitary and $\eva$ be an $\eps_{ev}$-approximate
  eigenvector with eigenvalue $\lambda$.  
  Then $\forall k,  \| U^k \eva -
  \lambda^k \eva \| \leq k \eps_{ev}$. 
\end{lemma}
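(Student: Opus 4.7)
The plan is to prove this by induction on $k$, using a telescoping identity and the fact that $U$ is norm-preserving.

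The base case $k=1$ is immediate from Definition~\ref{def:approx-ev}, and $k=0$ is trivial. For the inductive step, I would write
\[
U^k \eva - \lambda^k \eva \;=\; U\bigl(U^{k-1}\eva - \lambda^{k-1}\eva\bigr) \;+\; \lambda^{k-1}\bigl(U\eva - \lambda\eva\bigr),
\]
which is a one-line algebraic identity obtained by adding and subtracting $\lambda^{k-1} U \eva$. Taking norms and applying the triangle inequality yields
\[
\| U^k \eva - \lambda^k \eva \| \;\leq\; \| U(U^{k-1}\eva - \lambda^{k-1}\eva)\| \;+\; |\lambda|^{k-1}\,\|U\eva - \lambda\eva\|.
\]

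The next step is to simplify both terms. Since $U$ is unitary it preserves the Euclidean norm, so the first term equals $\|U^{k-1}\eva - \lambda^{k-1}\eva\|$, which is at most $(k-1)\eps_{ev}$ by the inductive hypothesis. For the second term, note that the discussion immediately preceding the lemma points out that the definition of an $\eps_{ev}$-approximate eigenvector forces $|\lambda|=1$ (the case $(\lambda I,\eva,\lambda,0)$ illustrates this, and in general an eigenvalue $\lambda \in \C^*$ that arises as a genuine eigenvalue of a unitary on the relevant subspace has unit modulus); hence $|\lambda|^{k-1} = 1$ and the second term is at most $\eps_{ev}$. Summing gives $k\eps_{ev}$, completing the induction.

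The only mild subtlety, and what I would be careful about, is justifying $|\lambda|=1$: the definition only requires $\lambda \in \C^*$, not $|\lambda|=1$, so if the author wants the cleanest statement one should observe that because $\eva$ has unit norm and $\|U\eva\|=1$, the triangle inequality gives $|\,|\lambda|-1\,| \leq \eps_{ev}$, and then the bound $|\lambda|^{k-1}\|U\eva-\lambda\eva\| \leq (1+\eps_{ev})^{k-1}\eps_{ev}$ would replace the clean $k\eps_{ev}$. Since the surrounding text treats $\lambda$ as a phase (it is written $\om^\theta$ in the phase-estimation discussion), I expect the intended reading is $|\lambda|=1$ and the $k\eps_{ev}$ bound follows as above. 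No other obstacles arise; the argument is a short induction of two or three lines.
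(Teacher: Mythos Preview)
Your proof is correct and follows essentially the same approach as the paper: induction on $k$, the same telescoping decomposition (the paper writes $U^{k-1}\eva = \lambda^{k-1}\eva + \eps_{k-1}\ket{E_{k-1}}$ and then expands, which amounts to your identity), triangle inequality, and the facts that $U$ is unitary and $|\lambda|=1$. Your observation about $|\lambda|=1$ is apt; the paper explicitly invokes this in its proof and justifies it in the discussion preceding the lemma.
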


\begin{proof}
  Proof by induction on $k$.  The base case is $k=1$ where $\|U^1 \eva
  - \lambda^1 \eva\| \leq 1\cdot \eps_{ev}$ by assumption.
  Assume the claim is true for $k-1$, that is,
$\| U^{k-1} \eva -
  \lambda^{k-1} \eva \| \leq (k-1) \eps_{ev}$.  Let 
  $U^{k-1} \eva = \lambda^{k-1} \eva + \eps_{k-1}\ket{E_{k-1}}$, where
    $\eps_{k-1} \leq (k-1) \eps_{ev}$.
  Then 
  \begin{align}
     \| U^k \eva - \lambda^k \eva \|
    &=\ \| U \cdot U^{k-1} \eva - \lambda^{k}\eva \|
    \\
    &=\  \| U (\lambda^{k-1} \eva + \eps_{k-1}\ket{E_{k-1}}) -
       \lambda^{k}\eva\|
    &      \text{induction hypothesis}
    \\
    &=\  \| U \eva+ \lambda^{-(k-1)}\eps_{k-1}U\ket{E_{k-1}} - \lambda 
      \eva \|
      & \text{multiply by $\lambda^{-(k-1)}$}
    \\
    &\leq \  \| U \eva- \lambda 
     \eva  \| +\|\lambda^{-(k-1)} \eps_{k-1} U \ket{E_{k-1}}\|
    \\
    &\leq \  \eps_{ev} +\eps_{k-1}\leq k\eps_{ev}.
    \end{align}
    The fact that $\lambda$ has norm one was used, and the last
    inequality is by definition of $\eps_{ev}$-approximate
    eigenvector.
\end{proof}

\begin{lemma} \label{lem:approx-ev-tvd}
  Consider the step before measuring 
  in the phase estimation routine
  run up to power $\ppower$ on an approximate eigenvector instance
  $(U,\eva,\lambda, \eps_{ev})$ versus an exact instance $(\lambda I,\eva,\lambda,0)$.
  Then the distance between these two states is at most $\ppower \eps_{ev}$. 
\end{lemma}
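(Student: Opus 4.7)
The plan is to directly compare the two states produced by the phase estimation circuit at the step right before the final measurement. After the controlled powers of $U$ are applied (and before the inverse Fourier transform), the two states are
\begin{align*}
\eva[\text{approx}] &= \frac{1}{\sqrt{T}} \sum_{k=0}^{T-1} \ket{k} \otimes U^k \eva, \\
\eva[\text{exact}]  &= \frac{1}{\sqrt{T}} \sum_{k=0}^{T-1} \ket{k} \otimes \lambda^k \eva.
\end{align*}
Since the subsequent inverse Fourier transform on the first register is unitary, it preserves distances, so it suffices to bound $\| \eva[\text{approx}] - \eva[\text{exact}] \|$.

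The key step is to write the difference as a single superposition over the control register, $\frac{1}{\sqrt{T}} \sum_{k=0}^{T-1} \ket{k} \otimes (U^k - \lambda^k I)\eva$, and then exploit orthogonality of the basis states $\ket{k}$ to turn the squared norm into a sum:
\[
\| \eva[\text{approx}] - \eva[\text{exact}] \|^2 \;=\; \frac{1}{T} \sum_{k=0}^{T-1} \big\|(U^k - \lambda^k I)\eva\big\|^2.
\]

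Next I would invoke Lemma \ref{lem:approx-ev}, which gives $\|(U^k - \lambda^k I)\eva\| \leq k\, \eps_{ev}$ for every $k \geq 0$. Substituting yields
\[
\| \eva[\text{approx}] - \eva[\text{exact}] \|^2 \;\leq\; \frac{\eps_{ev}^2}{T} \sum_{k=0}^{T-1} k^2 \;\leq\; \frac{\eps_{ev}^2}{T} \cdot T \cdot T^2 \;=\; T^2 \eps_{ev}^2,
\]
so taking the square root gives the claimed bound of $T \eps_{ev}$ (one could even sharpen this to $\eps_{ev}\sqrt{(T-1)(2T-1)/6}$ using the exact sum-of-squares formula, but the loose bound suffices for the stated lemma).

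I do not expect any real obstacle here: the only subtlety is noting that the inverse Fourier transform commutes with taking norms, so that bounding at the pre-Fourier stage is legitimate. The substantive content is already packaged into Lemma \ref{lem:approx-ev}, after which this result is essentially an application of the Pythagorean identity for the orthogonal control-register basis.
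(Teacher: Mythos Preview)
Your proposal is correct and follows essentially the same approach as the paper: both drop the unitary Fourier transform, use orthogonality of the control-register basis states $\ket{k}$ to turn the squared norm into a sum, and then plug in the bound $\|U^k\eva - \lambda^k\eva\|\leq k\eps_{ev}$ from Lemma~\ref{lem:approx-ev}. The paper carries the exact sum-of-squares formula through to get the slightly sharper constant $\eps_{ev}T/\sqrt{3}$, whereas you use the cruder bound $\sum_{k<T} k^2 \leq T^3$; since the lemma only claims $T\eps_{ev}$, your version is sufficient.
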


\begin{proof}
  If $\eva$ is an eigenvector with eigenvalue $\lambda$ for some
  operator $V$, the first step of the eigenvalue estimation algorithm when
  given power $\ppower$ would be to
  create the phase state  
  $\frac{1}{\sqrt{\ppower}}
  \sum_{i=0}^{\ppower-1}\ket{k} \otimes (V^k\eva)
  =\frac{1}{\sqrt{\ppower}}
  \sum_{i=0}^{\ppower-1} \lambda^k \ket{k} \otimes \eva$.  
  Instead, the approximate
  eigenvector instance $(U,\eva,\lambda,\eps_{ev})$ is given, and the
  state computed is $\frac{1}{\sqrt{\ppower}}
  \sum_{i=0}^{\ppower-1} \ket{k}\otimes (U^k\eva)$.  
  
By Lemma \ref{lem:approx-ev}, for each $k$, let the difference vector
be $\eps_k \ket{E_k}  = U^k\eva - \lambda^k \eva$ with $|\eps_k|
\leq k\eps_{ev}.$ Comparing the distance between the approximate eigenvector and the
exact eigenvector state before measurement gives
\begin{align}
  &\Big\| (F_\ppower\otimes I)\smfrac{1}{\sqrt{\ppower}}
    \sum_{k=0}^{\ppower-1} \ket{k} \otimes(U^k \ket{\psi})
 - (F_\ppower\otimes I)\smfrac{1}{\sqrt{\ppower}}
    \sum_{k=0}^{\ppower-1} \ket{k}\otimes (\lambda^k\eva) \Big\| \\
  &=
  \Big\| \smfrac{1}{\sqrt{\ppower}} \sum_{k=0}^{\ppower-1}
    \ket{k}\otimes (U^k \ket{\psi})
 - \smfrac{1}{\sqrt{\ppower}}
    \sum_{k=0}^{\ppower-1} \ket{k}\otimes (\lambda^k\eva) \Big\|
      \quad F_\ppower \text{ is unitary}\\
  & = 
  \Big\| \smfrac{1}{\sqrt{\ppower}} \sum_{k=0}^{\ppower-1} \ket{k}\otimes
    \left( U^k \ket{\psi} - \lambda^k\eva \right) \Big\| 
    =
    \Big\| \smfrac{1}{\sqrt{\ppower}} \sum_{k=0}^{\ppower-1} \ket{k}\otimes 
    (\eps_k \ket{E_k} )\Big\|
  \\
  & =
    \Big\| \smfrac{1}{\sqrt{\ppower}} \sum_{k=0}^{\ppower-1} \eps_k\ket{k}\otimes 
    \ket{E_k} \Big\|
    =\sqrt{ \smfrac{1}{\ppower}\sum_{k=0}^{\ppower-1} 
    \eps_k^2 }
    \quad \text{ the $\ket{k}\ket{E_k}$ states are orthonormal}
  \\
  & \leq
    \sqrt{ \smfrac{1}{\ppower}\sum_{k=0}^{\ppower-1} 
    (k \eps_{ev})^2 }
  \leq 
    \eps_{ev}\sqrt{ \smfrac{1}{\ppower} \sum_{k=0}^{\ppower-1} k^2}
  \\
  & =
    \eps_{ev}\sqrt{ \smfrac{1}{\ppower} \smfrac{1}{6} (\ppower-1)\ppower(2(\ppower-1)+1)}
    \leq
    \eps_{ev}\sqrt{ \smfrac{1}{6} (2\ppower^2 -3\ppower +1) }  \leq   \eps_{ev}\ppower/ \sqrt{3}.
\end{align}
\end{proof}


\begin{lemma}[Phase Estimation on an approximate
  eigenvector]
  \label{lem:pe-approx}
  \ \\
  There exists a quantum algorithm that on input
  $$
  {\cal T}_{in} = \left( U, \ket{\psi}, q, \eps_{ev}, p_{err} \right)
  $$
  where
   $p_{err}^2/(128\eps_{ev}) \geq 1$, $\text{etime}(U,T)$ is the time it takes
   to compute $U^T$, 
  $\ket{\psi}$ is supported on vectors in $\Z_q^n$,
  where
  $(U,\eva,\om_q^s,\eps_{ev})$ 
  is an approximate eigenvector for some $s\in \Z_q$,
  returns ${\cal O} \in \Z_q$ such that
  $$
  \Pr
  \left(\big| {\cal O} - s|_q 
  \leq 
  129q \eps_{ev} / p_{err}^2
  \right)
  \geq
  1- p_{err}.
  $$
  The running time of the algorithm is $\poly(n \log(q),
  \text{etime}(U,p_{err}/(8\eps_{ev}))$.

\end{lemma}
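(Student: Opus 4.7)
The plan is to apply the standard phase estimation algorithm (Theorem~\ref{thm:poweroftwo}) to $U$ with input state $\eva$, run to some power $T = 2^t$, and then rescale the measured outcome $\phasesamp \in [T]$ by $q/T$ and round to the nearest integer modulo $q$ to obtain ${\cal O} := \round{\phasesamp\, q/T} \bmod q$. First I would invoke Lemma~\ref{lem:approx-ev-tvd} to bound the $\ell_2$-distance between the pre-measurement phase state produced by $(U,\eva)$ and the state that would be produced by the exact instance $(\om_q^s I, \eva, \om_q^s, 0)$ by $T\eps_{ev}/\sqrt{3}$, and then use Lemma~\ref{lem:bv} to convert this into a total-variation bound of $4T\eps_{ev}/\sqrt{3}$ on the final measurement distributions. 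In the exact case, Theorem~\ref{thm:poweroftwo} with $t = m+r+1$ guarantees $|\phasesamp/T - s/q|_1 \leq 1/2^r$ with probability at least $1-1/2^m$, so by a union bound this precision also holds in the approximate case with probability $\geq 1 - 1/2^m - 4T\eps_{ev}/\sqrt{3}$.

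Next I would set $T$ to be (roughly) the largest power of two at most $\sqrt{3}\,p_{err}/(8\eps_{ev})$; the hypothesis $p_{err}^2/(128\eps_{ev}) \geq 1$ ensures this is at least $1$, so $t \geq 0$ is well-defined. This choice forces the TVD term to be at most $p_{err}/2$ and matches the stated running time, since the dominant cost of phase estimation is computing controlled-$U^{2^i}$ for $i < t$, which is $O(\text{etime}(U,T))$. I then pick $m$ with $2^m \geq 2/p_{err}$, using only $O(\log(1/p_{err}))$ additional ancilla qubits, making the phase-estimation failure at most $p_{err}/2$ and hence the overall failure at most $p_{err}$. The remaining precision satisfies $1/2^r = 2^{m+1}/T = O(\eps_{ev}/p_{err}^2)$. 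From $|\phasesamp/T - s/q|_1 \leq 1/2^r$ one gets $|\phasesamp q/T - s|_q \leq q/2^r$, and integer rounding contributes at most an additional $1/2$, so $|{\cal O} - s|_q \leq q/2^r + 1/2 \leq 129\, q\eps_{ev}/p_{err}^2$ once constants are tracked; the generous constant $129$ absorbs the $\sqrt{3}$ factor, the rounding of $T$ down to a power of two, and the $+1/2$ from integer rounding.

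I do not expect a serious conceptual obstacle: the three ingredients --- propagation of approximation error from Lemma~\ref{lem:approx-ev-tvd}, the measurement-distribution perturbation bound from Lemma~\ref{lem:bv}, and the high-precision phase-estimation statement of Theorem~\ref{thm:poweroftwo} --- are all in place. The only delicate part is the constant-chasing that simultaneously (a) keeps the approximation-induced TVD at most $p_{err}/2$, (b) provides enough bits of phase precision so that $q/2^r \leq O(q\eps_{ev}/p_{err}^2)$, and (c) fits inside the stated running time $\poly(n\log q, \text{etime}(U, p_{err}/(8\eps_{ev})))$. A minor subtlety worth checking is that the rescaling $\phasesamp/T \mapsto \phasesamp q/T$ preserves modular distances, which it does because $|\phasesamp/T - s/q|_1 \leq 1/2^r$ means there is an integer $k$ with $|\phasesamp/T + k - s/q| \leq 1/2^r$, and multiplying by $q$ converts the ambiguity $kq$ into the $\|\cdot\|_q$ quotient directly.
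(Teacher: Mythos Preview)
Your proposal is correct and follows essentially the same route as the paper: run standard phase estimation to a power $T \approx p_{err}/(8\eps_{ev})$, split the error budget into $p_{err}/2$ for the phase-estimation failure (via the choice of $m$) and $p_{err}/2$ for the approximate-eigenvector perturbation (via Lemmas~\ref{lem:approx-ev-tvd} and~\ref{lem:bv}), then rescale by $q$ and round. The paper's proof sets $a=\lceil\log(p_{err}^2/(2^7\eps_{ev}))\rceil$, $b=\lceil\log(2/p_{err})\rceil$, $T=2^{a+b+1}$ and drops the $\sqrt{3}$ improvement you retain, but otherwise the arithmetic and the final $129q\eps_{ev}/p_{err}^2$ bound match yours.
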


\begin{proof}
  Let 
  $$
  b = \lceil \log(2/p_{err}) \rceil
  \quad
  a = \lceil \log(p_{err}^{2}/ (2^7 \eps_{ev})\rceil,
  \quad
  T = 2^{a + b + 1},
  $$
  and then
  $T  \leq 2^{(\log(2/p_{err}) + 1) + (\log(p_{err}^2/(2^7\eps_{ev}) +
    1)) + 1} = \frac{2}{p_{err}} \frac{p_{err}^2}{2^7 \eps_{ev}}2^3= p_{err}/ (2^3\eps_{ev})$. 
  First consider running phase estimation on $V=\om_q^{s}I$, eigenvector $\eva$ and power 
  $\ppower$
  returns $\phasesamp$ such that
  $$
  \Pr \left(
  \left| 
   \smfrac{\phasesamp}{\ppower}-\smfrac{s}{q}\right|_1
  \leq \smfrac{1}{2^a} \right)
  \geq 1-\smfrac{1}{2^b}
  $$
  by Theorem~\ref{thm:poweroftwo}.
  By the choice of $a$, $\frac{1}{2^a}
  = 2^{- \lceil \log ( p_{err}^2/(2^7 \eps_{ev} ) ) \rceil} \leq
  2^7\eps_{ev}/p_{err}^2$, and by choice of $b$, $\frac{1}{2^b}
  = 2^{- \lceil \log(2/p_{err}) \rceil } \leq p_{err}/2$.
  Therefore
    $$
  \Pr
  \left( \left| \smfrac{\phasesamp}{\ppower}-\smfrac{s}{q} \right|_1
  \leq 
  2^7 \eps_{ev}/p_{err}^2 \right)
  \geq
  1- p_{err}/2.
  $$ 
  Scaling by $q$, the condition is equivalent to $\left| q \smfrac{\phasesamp}{\ppower}-s  \right|_q
  \leq 2^7 q \eps_{ev}/p_{err}^2 $.
  Let $ {\cal O} = \lfloor q \smfrac{\phasesamp}{\ppower}
  \rceil.$  Then
  $$
  \left| {\cal O} - s \right|_q
  = \left| \lfloor q \smfrac{\phasesamp}{\ppower}\rceil -q
    \smfrac{\phasesamp}{\ppower} + 
  q \smfrac{\phasesamp}{\ppower}-s \right|_q
  \leq  \left| \lfloor q \smfrac{\phasesamp}{\ppower}\rceil -q
    \smfrac{\phasesamp}{\ppower} \right|_q + 
  \left| q \smfrac{\phasesamp}{\ppower}-s \right|_q \leq
  \smfrac{1}{2} +  2^7 q \eps_{ev} /p_{err}^2
  \leq 129 q \eps_{ev} / p_{err}^2.
  $$
  For the error bound, consider using $U$ instead on the
  $\eps_{ev}$-approximate eigenvector $\eva$. 
  By Lemma~\ref{lem:approx-ev-tvd} and Lemma~\ref{lem:bv} the error
  increases by at most 
  $4T\eps_{ev}
    \leq  p_{err}/2.$ 
  The union bound  on the phase estimation error and the approximate
  eigenvector error gives a total error at most $p_{err}/2 + p_{err}/2\leq p_{err}$.
\end{proof}

\section{An approximate eigenvector of shift operators close to group
  elements}  

In this section a quantum state with a random phase is defined that is an
approximate eigenvector of shift operators whose shifts are close to
points in the lattice $L$.  As described in
Section~\ref{sec:lat-group}, formally the setting will be in the finite abelian group $G = L \bmod q$, which is a subgroup of $\zqn$, together
with a distance $\|\cdot\|_q$ on $\zqn$.  This setup makes it possible
to take a $q$-periodic lattice and target vector, reduce them mod $q$,
define and solve BDD over $\zqn$, and to map the solution back to the
integers.  For clarity this section will be restricted to finite groups.

\subsection{Phased Cube States and BDD on Subgroups of $\zqn$}

The approximate eigenvector is a superposition of lattice points
with a phased cube around each point.  The cube's side length controls
how much cubes around two nearby $\zqn$ points overlap.

For $\rl \in {\mathbb N}$ define the zero-centered set of ``radius''
$\rl$ as $\pmr = \{q-\rl+1,\ldots, \rl\} \subseteq \Z_q$.  The
set $\pmr$ has $\sl$ elements.

\begin{definition}\label{def:pcs}
  Let $\sl\in {\mathbb N}$ be a side length.
  \begin{enumerate}
    \item Define the {\em cube state}
 around a point $\vy \in \zqn$ by
 \[\cube{\vy} =\smfrac{1}{(\sl)^{\dimin/2}} 
   \sum_{\vz \in \pmr^\dimin} \ket{\vy +\vz }.\]
\item Let $\G$ be a subgroup of $\zqn$ with $\grank = \fgr(G)$,
  generator matrix $\matG$ 
  and coefficient space $\coeffs$.
 Define the {\em phased cube state} with label $\va \in \sampsp$ to be
 \[
   \pcs{\va}
 = \smfrac{1}{\sqrt{|G|}} \sum_{\vc \in \coeffs} \charval{\va}{\vc}
 \cube{\matG\vc }
 = \smfrac{1}{\sqrt{|G|(\sl)^\dimin}} 
 \sum_{\vc \in \coeffs} \charval{\va}{\vc}
\sum_{\vz\in \pmr^\dimin} \ket{\matG\vc +\vz}.
\]
\end{enumerate}
\end{definition}

\begin{lemma}[Cube state properties] \ \label{lem:cs-props}
  \begin{enumerate}
    \item \label{lem:cs-props-compute-cube} Given $\vy \in \zqn$, $\cube{\vy}\ket{\vy}$ is computable in time
      $\poly(\dimin \log q)$.

    \item Define the {\em shift operator}
      $U_\vx$ by
      $U_{\vx} \ket{\vy} = \ket{\vx+\vy}$.\\
      $\forall \vx,\vy \in \zqn, U_\vx \cube{\vy} = \cubep{\vx+ \vy}$,
      and the transformation $\cube{\vy}\ket{\vx}$ to
      $\cubep{\vx+\vy}\ket{\vx}$ is computable in time $poly(\dimin \log q)$.

  \item Let $\cube{\vy}$ be a cube state of side length $\sl$ and
    let $\vD\in \zqn$.
    \begin{enumerate}

    \item \label{lem:cs-props-close}  Then $\big\| \cube{\vy} -
        \cubep{\vy+\vD} \big\|
        \leq \sqrt{ \dimin \frac{\|\vD\|_q}{\rl}}$.
        
    \item \label{lem:cs-props-far} If $\|\vD\|_q \geq
      \sqrt{\dimin}\sl+1$ then
      $\cubet{\vy}  \cubep{\vy+\vD}  =0$.
    \end{enumerate}
      
 \end{enumerate}
\end{lemma}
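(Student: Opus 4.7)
The plan is to handle the four sub-claims in order; parts (1) and (2) follow essentially directly from the definition of the cube state, (3b) is a short disjointness argument about arcs of length $\sl$ on the cycle $\Z_q$, and (3a) is the main combinatorial calculation.

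For part (1), starting from the classical register $\ket{\vy}$ and a fresh ancilla initialized to $\ket{\vzero}$, I would apply a uniform-superposition gate on each of the $\dimin$ coordinates of the ancilla to prepare $\ket{\vy}\otimes\smfrac{1}{(\sl)^{\dimin/2}}\sum_{\vz\in\pmr^\dimin}\ket{\vz}$, which is efficient since $\pmr$ has $\sl\leq q$ elements. A controlled addition $\ket{\vy}\ket{\vz}\mapsto\ket{\vy}\ket{\vy+\vz}$ then yields $\ket{\vy}\otimes\cube{\vy}$. For part (2), linearity and the definition immediately give $U_\vx\cube{\vy}=\cube{\vx+\vy}$, and the controlled version $\cube{\vy}\ket{\vx}\mapsto\cube{\vx+\vy}\ket{\vx}$ is implemented as a controlled shift of the cube register by the classical value $\vx$, again in $\poly(\dimin\log q)$ time.

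For the overlap formulas in parts (3a) and (3b), I would start from the expansion
\[
  \cubet{\vy}\cube{\vy+\vD}=\smfrac{1}{(\sl)^\dimin}\big|\{(\vz,\vz')\in\pmr^\dimin\times\pmr^\dimin:\vz-\vz'\equiv\vD\pmod q\}\big|.
\]
For (3b), any such difference lies coordinatewise in $\{-\sl+1,\ldots,\sl-1\}\subset\Z$ and hence has Euclidean norm at most $\sqrt{\dimin}(\sl-1)<\sqrt{\dimin}\sl$; since $\|\vD\|_q$ is the minimum norm over all integer representatives of $\vD$, the hypothesis $\|\vD\|_q\geq\sqrt{\dimin}\sl+1$ forces the set to be empty. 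For (3a), in the regime $\sl\leq q/2$ the count factorizes across coordinates with the $i$-th factor equal to $\max(0,\sl-|D_i|_q)$, giving
\[
  \cubet{\vy}\cube{\vy+\vD}=\prod_{i=1}^\dimin\max\!\Big(0,\,1-\smfrac{|D_i|_q}{\sl}\Big).
\]
Applying the elementary inequality $1-\prod_i(1-x_i)\leq\sum_i x_i$ for $x_i\in[0,1]$, together with $\sum_i|D_i|_q\leq\dimin\max_i|D_i|_q\leq\dimin\|\vD\|_q$, gives $1-\cubet{\vy}\cube{\vy+\vD}\leq\dimin\|\vD\|_q/\sl$. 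Combining with $\|\ket{A}-\ket{B}\|^2=2(1-\Re\<A|B\>)$ for unit vectors with a real nonnegative overlap, and using $\sl=2\rl$, yields $\|\cube{\vy}-\cube{\vy+\vD}\|^2\leq\dimin\|\vD\|_q/\rl$, which is the claimed bound after taking a square root.

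I do not anticipate a serious obstacle: the heart of the argument is just counting the intersection of two length-$\sl$ arcs on the cycle $\Z_q$ and multiplying the counts across the $\dimin$ independent coordinates. The only mildly delicate point is the modular wrap-around in the one-dimensional overlap count, but restricting to $\sl\leq q/2$ ensures that each admissible $z_i-z_i'$ has a unique integer representative in $\{-(\sl-1),\ldots,\sl-1\}$; when $\sl>q/2$ the right-hand side $\sqrt{\dimin\|\vD\|_q/\rl}$ already dominates the trivial bound $\sqrt{2}$, so the inequality is vacuous in that regime.
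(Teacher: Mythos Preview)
Your proposal is correct and follows essentially the same approach as the paper: both compute the inner product $\cubet{\vy}\cube{\vy+\vD}$ as the fraction of common points of two shifted cubes, lower-bound this by a product-type expression, and linearize (you use $1-\prod(1-x_i)\leq\sum x_i$, the paper uses the cruder per-coordinate bound $|D_i|_q\leq\|\vD\|_q$ followed by Bernoulli's inequality); parts (1), (2), and (3b) are likewise handled the same way in both. The only cosmetic difference is that you write down the exact coordinate-wise factorization $\prod_i\max(0,1-|D_i|_q/\sl)$ before bounding it, whereas the paper jumps directly to $(1-\|\vD\|_q/\sl)^\dimin$; both routes land on the identical final bound.
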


\begin{proof}
  \begin{align}
  \ket{\vzero,\vy}
  \stackrel{F_\sl^{\otimes \dimin}}{\longrightarrow}
  &\ \smfrac{1}{(\sl)^{\dimin/2}}
  \sum_{\vz \in \Z_{\sl}^\dimin}
    \ket{\vz,\vy}
    \\
    \stackrel{(-(\rl)_i)_1}{\longrightarrow }
  &\ \smfrac{1}{(\sl)^{\dimin/2}}
  \sum_{\vz \in \pmr^\dimin}
    \ket{\vz,\vy}
    & \vz \rightarrow \vz-(\rl,\ldots,\rl), \text{ and  reindex}
    \\
    \stackrel{(+\vy)_1}{\longrightarrow }
  &\ \smfrac{1}{(\sl)^{\dimin/2}}
  \sum_{\vz \in \pmr^\dimin}
    \ket{\vy+\vz,\vy}
    \\
    =
  &\ \cube{\vy}\ket{\vy}.
  \end{align}
  
For $\vx \in \zqn$, $U_\vx \cube{\vy} = \frac{1}{(\sl)^{\dimin/2}} \sum_{\vz \in \pmr^{\dimin}}
U_\vx \ket{\vy+\vz }
= \frac{1}{(\sl)^{\dimin/2}}
\sum_{\vz \in \pmr^{\dimin}} \ket{\vx +\vy+\vz}
= \cubep{\vx+ \vy}$.  Therefore, given $\cube{\vy}\ket{\vx}$, one
addition from the
second register into the first register results in $\cubep{\vx+\vy}\ket{\vx}$.

For (\ref{lem:cs-props-close}), start with
$$
\Big\| 
      \cube{\vy}-\cubep{\vy + \vD} \Big\|^2 =
2 \cdot (1 - \Re(\cubet{\vy}\cubep{\vy + \vD} )).
$$
Since
\begin{align}
& \cubet{\vy} \cubep{\vy+\vD } 
= \cubet{\vzero}\cube{\vD}
        &\text{invariant under shift by $-\vy$}
  \\ & = \ 
       \frac{1}{(\sl)^\dimin} \sum_{\vz \in \pmr^\dimin} \sum_{\vz' \in \pmr^\dimin}
  \<\vz|\vz'+ \vD\>
\\ &  =\ \frac{\text{\# common points in } \pmr^\dimin \text{ and } \pmr^\dimin+\vD }{(\sl)^\dimin}
  \\ & \geq \ 
       \frac{(\sl - \|\vD\|_q)^{\dimin}}{(\sl)^{\dimin}}
    = 
       (1-\smfrac{\|\vD\|_q}{2\rl})^\dimin
\geq       1-\dimin \smfrac{\|\vD\|_q}{2\rl}.
       &\text{Bernouli's inequality}
\end{align}
Therefore 
$\big\| \cube{\vy} - \cubep{\vy+\vD} \big\|
\leq \sqrt{\dimin\frac{\|\vD\|_q}{\rl}}$.

For (\ref{lem:cs-props-far}) assume $\|\vD\|_q \geq \sqrt{\dimin}\sl+1$.  To have a common point, there must 
exist $\vz, \vz' \in \pmr^\dimin$ such that $\vz =
\vz'+\vD \in \zqn$.  For this to happen, $\vz - \vz' =
\vD$, and so $ \sqrt{\dimin}\sl+1 \leq \| \vD\|_q  = \| \vz -
\vz'\|_q \leq \| \vz\|_q + \|\vz'\|_q \leq
2\sqrt{\dimin}\rl$, which is a contradiction, so no points 
are in common.
\end{proof}

\begin{algorithm}[Computing a PCS state] \label{alg:pcs-state}
\ \\
  Input:  A decomposed subgroup $\groupdecomp$ of $\zqn$ 
  and a cube side length $\sl$.\\
  Output: $\pcs{\va}$ and $\va \in_R \sampsp$.\\
  \begin{align}
    \ket{\vzero,\vzero} 
    \longrightarrow
    & \ \cube{\vzero}\ket{\vzero}
    &\text{ by Lemma~\ref{lem:cs-props}(\ref{lem:cs-props-compute-cube}}) 
    \\
    \stackrel{ (F_\coeffs)_2}{\longrightarrow}
    &\ \smfrac{1}{\sqrt{|G|}} 
      \sum_{\vc \in \coeffs}
      \cube{\vzero}\ket{\vc}
    \\ 
  \stackrel{(+\matG\vc)_1}{\longrightarrow}
  &\ 
    \smfrac{1}{\sqrt{|G|}} 
    \sum_{\vc \in \coeffs}
    \cube{\matG\vc}\ket{\vc}
    \\ 
  \stackrel{(F_{\sampsp})_2}{\longrightarrow}
  &\ 
    \smfrac{1}{\sqrt{|G|}}
    \sum_{\vc \in \coeffs}
    \cube{\matG\vc}
    \Big(
    \smfrac{1}{\sqrt{|\sampsp|}}
    \sum_{\va \in \sampsp}
    \charval{\va}{\vc}
    \ket{\va} \Big)
    \\
  \stackrel{M_2}{\longrightarrow}
  &\ 
    \smfrac{1}{\sqrt{|G|}} 
    \sum_{\vc  \in \coeffs}
    \charval{\va}{\vc}
    \cube{\matG\vc } \ket{\va}
    \\
    =& \ 
    \pcs{\va}\ket{\va}.
\end{align}
\end{algorithm}

\begin{lemma}[Phased Cube State Approximate Eigenvector Properties] \
  \\ \label{lem:pcs-props}
  Let $\G$ be a subgroup of $\zqn$ with decomposition $\groupdecomp
  \in \Z_q^{\dimin \times \grank}\times \Z^\grank$, with
  shortest (nonzero) element length 
  $\lambda_1 = 
  \min_{\vv \in G\backslash\{ \vzero \}} \|\vv\|_q$, 
  and a phased cube state $\pcs{\va}$ for $\va \in \sampsp$ and side
  length $\frac{1}{4} \frac{\lambda_1}{\sqrt{\dimin}} \leq \sl \leq \frac{1}{2}\frac{\lambda_1}{\sqrt{\dimin}}$.  Then
  \begin{enumerate}
  \item \label{pcs-props-ev}  If $\vc\in \coeffs,$ let $\matG\vc 
    = \vv \in G$, and then $U_{\vv} \pcs{\va} 
    = \charval{\va}{-\vc} \pcs{\va}$.

  \item \label{pcs-props-approx} $\forall \vD \in \zqn$ with $\| \vD \|_q \leq \frac{\lambda_1}{2},$ 
    $\|U_{\vD } \pcs{\va} - \pcs{\va}\|
    \leq 4 \dimin^{3/4} \sqrt{\smfrac{\| \vD \|_q}{\lambda_1}}.$

  \item \label{pcs-approx-ev} $\forall \vy \in \zqn$, let
    $\matG\vs \in G$ with coefficients $\vs \in \coeffs$ be such that
    $\vD= \vy-\matG\vs$ satisfies
    $\|\vD\|_q \leq \frac{\lambda_1}{2}$.  Then
    $\|U_{\vy } \pcs{\va} - \charval{\va}{-\vs} \pcs{\va}\| \leq 4
    \dimin^{3/4} \sqrt{\smfrac{\| \vD\|_q}{\lambda_1}} =:
    \eps_{ev}.$ In particular, for all $\va \in \sampsp$, the
    state $\pcs{\va}$ is an $\eps_{ev}$-approximate eigenvector of
    $U_{\vy}$ for any element $\vy$ where $\dist_q(\vy,\G) \leq \frac{\lambda_1}{2}$. 

  \item \label{pcs-compute} Given a decomposed group $\groupdecomp$, Algorithm
    \ref{alg:pcs-state} with side length $\sl$ computes the state $\pcs{\va}\ket{\va}$
    in time $\poly(\dimin \log q)$, where $\va$ is a uniformly chosen
    element from $\sampsp$.

  \end{enumerate}
\end{lemma}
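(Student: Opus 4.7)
The plan is to tackle the four parts in order, leveraging Lemma~\ref{lem:cs-props} on cube states and standard character identities for $\coeffs$. Parts~(\ref{pcs-props-ev}) and (\ref{pcs-approx-ev}) are direct once the key estimate of part~(\ref{pcs-props-approx}) is in hand, and part~(\ref{pcs-compute}) is verified by tracing through Algorithm~\ref{alg:pcs-state}.

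For part~(\ref{pcs-props-ev}), I would apply $U_{\matG\vc}$ term-by-term inside $\pcs{\va}$: by the shift property of cube states, each $\cube{\matG\vd}$ becomes $\cubep{\matG(\vd+\vc)}$, so reindexing $\ve = \vd+\vc \pmod{\vq}$ transforms the sum into $\sum_\ve \charval{\va}{\ve-\vc}\cube{\matG\ve}$, and the character identity $\charval{\va}{\ve-\vc}=\charval{\va}{-\vc}\charval{\va}{\ve}$ pulls out the global phase. Part~(\ref{pcs-approx-ev}) is then immediate: factoring $U_\vy = U_\vD U_{\matG\vs}$, part~(\ref{pcs-props-ev}) extracts the phase $\charval{\va}{-\vs}$, and part~(\ref{pcs-props-approx}) bounds the residual norm after acting by $U_\vD$, using that multiplication by a unit-modulus phase preserves distances.

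The main technical step is part~(\ref{pcs-props-approx}). I would expand
\[
\|U_\vD\pcs{\va}-\pcs{\va}\|^2 = \smfrac{1}{|G|}\Big\|\sum_{\vc\in\coeffs}\charval{\va}{\vc}\bigl(\cubep{\matG\vc+\vD}-\cube{\matG\vc}\bigr)\Big\|^2
\]
and argue that all cross terms vanish. For $\vc\neq\vc'$, the differences of cube centers appearing in the four resulting inner products are of the form $\matG(\vc-\vc')+\delta$ with $\delta\in\{\vzero,\pm\vD\}$; since $\|\matG(\vc-\vc')\|_q\geq\lambda_1$ and $\|\vD\|_q\leq\lambda_1/2$, each such difference has $\|\cdot\|_q\geq\lambda_1/2$, which exceeds $\sqrt{\dimin}\sl$ by the upper bound $\sl\leq\lambda_1/(2\sqrt{\dimin})$, so by Lemma~\ref{lem:cs-props}(\ref{lem:cs-props-far}) the corresponding cube inner products vanish. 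The diagonal sum is bounded by $\dimin\|\vD\|_q/\rl$ using Lemma~\ref{lem:cs-props}(\ref{lem:cs-props-close}), and substituting $\rl\geq\lambda_1/(8\sqrt{\dimin})$ from the lower bound on $\sl$ yields the advertised $4\dimin^{3/4}\sqrt{\|\vD\|_q/\lambda_1}$ after taking square roots.

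For part~(\ref{pcs-compute}), I would walk through Algorithm~\ref{alg:pcs-state}: build $\cube{\vzero}\ket{\vzero}$ via Lemma~\ref{lem:cs-props}(\ref{lem:cs-props-compute-cube}), apply $F_\coeffs$ to spread the coefficient register uniformly over $\coeffs$, add $\matG\vc$ into the first register to obtain $\cube{\matG\vc}\ket{\vc}$, apply $F_\sampsp$ through the $\coeffs\hookrightarrow\sampsp$ embedding recalled in Section~\ref{sec:lat-group}, and measure; all character values are unit-modulus, so $\va$ is uniform over $\sampsp$ and the first register collapses to $\pcs{\va}$. The main obstacle I anticipate is part~(\ref{pcs-props-approx}): the orthogonality argument is delicate near the $\sqrt{\dimin}\sl+1$ threshold in Lemma~\ref{lem:cs-props}(\ref{lem:cs-props-far}), and one has to check that the chosen window for $\sl$ together with $\|\vD\|_q\leq\lambda_1/2$ comfortably covers all three shifted configurations of cube centers while still propagating the small constants to the advertised factor of $4$.
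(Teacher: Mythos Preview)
Your proposal is correct and follows essentially the same route as the paper: part~(\ref{pcs-props-ev}) via reindexing and the character identity, part~(\ref{pcs-props-approx}) by bounding each term with Lemma~\ref{lem:cs-props}(\ref{lem:cs-props-close}) and then passing to the $\max$, part~(\ref{pcs-approx-ev}) by composing the previous two, and part~(\ref{pcs-compute}) by tracing Algorithm~\ref{alg:pcs-state} and using orthogonality of the cube states to get uniformity of $\va$. The one place where you are actually \emph{more} explicit than the paper is part~(\ref{pcs-props-approx}): the paper jumps directly from $\big\|\tfrac{1}{\sqrt{|G|}}\sum_\vc \charval{\va}{\vc}\,\eps_\vc\ket{E_\vc}\big\|$ to $\max_\vc|\eps_\vc|$ without spelling out why the error vectors $\ket{E_\vc}$ are pairwise orthogonal, whereas you correctly identify that this hinges on Lemma~\ref{lem:cs-props}(\ref{lem:cs-props-far}) applied to the shifted cube centers---and your flagged concern about the $\sqrt{\dimin}\,\sl+1$ threshold is legitimate (it is handled by the slack between $\lambda_1/2$ and $\sqrt{\dimin}\,\sl$ coming from the integrality of the cube, but the paper does not dwell on it either).
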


\begin{proof}
For (\ref{pcs-props-ev}), let $\vv= \matG\vc \in G$ and $\va \in \sampsp$, then
\begin{align}
  U_{\vv} \pcs{\va} 
  &= \smfrac{1}{\sqrt{|G|}}
        \sum_{\vd\in \coeffs} \charval{\va}{\vd}
    U_{\vv} \cube{ \matG\vd } \\
  & = \smfrac{1}{\sqrt{|G|}}
        \sum_{\vd \in \coeffs} \charval{\va}{\vd}
    \cubep{\matG\vc+\matG\vd } \\
  &= \smfrac{1}{\sqrt{|G|}}  
        \sum_{\vd' \in \coeffs}
    \charval{\va}{\vd'-\vc} \cube{\matG\vd'} 
    &\text{ reindex coeffs with } \vd'=\vc+\vd \\
  &= \smfrac{1}{\sqrt{|G|}}  
        \sum_{\vd' \in \coeffs}
    \charval{\va}{\vd'} \charval{\va}{-\vc} \cube{\matG\vd'} 
\\
    &= \charval{\va}{-\vc} \pcs{\va}.
\end{align}

For (\ref{pcs-props-approx}), let $\vD\in \zqn$, and then
\begin{align}
  &\Big\|U_{\vD } \pcs{\va} - \pcs{\va} \Big\|
    \\ &
  = \ \Big\| U_\vD \smfrac{1}{\sqrt{|G|}} \sum_{\vc \in \coeffs}
         \charval{\va}{\vc} \cube{\matG\vc} -
         \smfrac{1}{\sqrt{|G|}} \sum_{\vc \in \coeffs} \charval{\va}{\vc} \cube{\matG\vc} \Big\|
  \\ &
=\  \Big\| \smfrac{1}{\sqrt{|G|}} \sum_{\vc \in \coeffs}
     \charval{\va}{\vc} \cubep{\matG\vc+\vD} -
         \smfrac{1}{\sqrt{|G|}} \sum_{\vc \in \coeffs}
       \charval{\va}{\vc} \cube{\matG\vc} \Big\| 
  &\text{apply $U_\vD$}
  \\ & 
  = \ \Big\| \smfrac{1}{\sqrt{|G|}} \sum_{\vc \in \coeffs} 
      \charval{\va}{\vc} \left( \cubep{ \matG\vc + \vD} -
     \cube{\matG\vc} \right)\Big\|
      &\text{group terms}
  \\&  
   = \ \Big\| \smfrac{1}{\sqrt{|G|}} \sum_{\vc \in \coeffs} 
  \charval{\va}{\vc} \eps_{\vc} \ket{E_{\vc}} \Big\|
   \text{ by Lemma \ref{lem:cs-props}(\ref{lem:cs-props-close})},
\\
  & \quad \quad \quad \quad
     \quad \quad \text{ let } \eps_\vc \ket{E_\vc}
    \text{ be the     difference,}
  \text{ with }
      |\eps_\vc | \leq \sqrt{ \dimin \smfrac{\| \vD\|_q}{\rl}} 
  \\ &
  \leq \ \max_{\vc \in \coeffs} |\eps_\vc|
       \leq \sqrt{ \dimin \smfrac{\|\vD\|_q}{\rl}}
        \leq \ \sqrt{8 \dimin^{3/2} \smfrac{\| \vD \|_q}{\lambda_1}}
        \leq
        4 \dimin^{3/4} \sqrt{\smfrac{\| \vD \|_q}{\lambda_1}}.
\end{align}

For (\ref{pcs-approx-ev}), let $\vy \in \zqn$ satisfy $\dist_q(\vy, G) \leq
    \eps_1 \lambda_1$, let $\matG\vs \in G$ be the closest element to
    $\vy$, and let $\vD = \vy  - \matG\vs$.  Then
    $\|U_\vy  \pcs{\va} - \charval{\va}{-\vs} \pcs{\va} \| 
    =\|U_{\matG \vs +\vD} \pcs{\va} - \charval{\va}{-\vs}
    \pcs{\va}\|
    =|\charval{\va}{-\vs}| \cdot \|U_{\vD } \pcs{\va} - 
    \pcs{\va}\|
    \leq 4 \dimin^{3/4} \sqrt{\smfrac{\| \vD \|_q}{\lambda_1}}$
    by part (\ref{pcs-props-ev}) and then part (\ref{pcs-props-approx}).

Finally, for (\ref{pcs-compute}), the algorithm computes the Fourier
transform, addition in $\zqn$, and measurement, which are all
polynomial time in $n$ and $\log q$.  
By Lemma \ref{lem:cs-props} the $\cube{\vv}$'s form an orthogonal set of states.  
The probability of measuring $\va$ is hence proportional to
$\left\| \sum_{\vc \in \coeffs} \charval{\vc}{\va}  \cube{\matG\vc} \right\|^2
=
\sum_{\vv \in G}
\left\| \cube{\vv} \right\|^2$
which is independent of $\va$.
\end{proof}

Using PCS states and the phase estimation algorithm on approximate eigenvectors
we define the following algorithm which outputs LWE samples but
with error different than the typical Gaussian error.  In order keep
the terminology more closely related to lattices we phrase it as
sampling inner products.
\begin{mdframed}
\begin{algorithm}\label{alg:random-hip}

  \textbf{${\rm SampleHIP}(\groupdecomp, \sl, \eps_1,
    \vt, p_{err})$}
  \ \\
  \noindent Input: finite group decomposition $\groupdecomp$, side
  length $\sl$, target vector $\vt\in \Z_q^n$. 

\begin{enumerate}
\item
  Compute a PCS state using Algorithm \ref{alg:pcs-state} on
  $\groupdecomp$ and $\sl$ to get a random label $\va$ and state
  $\pcs{\va}$ on $(\dimin +\grank)\log q$ qubits.

\item
Let $\eps_{ev} =  \sqrt{\eps_1}  \cdot 4 \dimin^{3/4}$.
 Run phase estimation on
$
  {\cal T}_{in} = 
  \left( U_\vt, \pcs{\va}, q, \eps_{ev},  p_{err} \right) 
 $ to get output $\cal O$
 and return $(\va,{\cal O}) \in \Z_{q}^\grank \times\Z_{q}$.

\end{enumerate}

\end{algorithm}
\end{mdframed}

\begin{lemma}[Sampling Hidden Inner Products]
  \label{lem:sample-g} 
  
\noindent
If $\sl \in \left[
   \frac{1}{4}\frac{\lambda_1(\G)}{\sqrt{\dimin}},
   \frac{1}{2}\frac{\lambda_1(\G)}{\sqrt{\dimin}}\right]$,  $0 < \eps_1 < \frac{1}{2}$
 and $\dist_q(\vt,\G) \leq \eps_1\lambda_1$ 
 then Algorithm \ref{alg:random-hip} 
 runs in time $\poly(\dimin \log q,
 \log(p_{err}/\eps_{1}))$ and
 returns a uniformly
 random $\va \in \sampsp$, and an ${\cal O}\in \Z_q$ satisfying 
 $$
 \Pr \left( \left| {\cal O} - \charphase{\va}{(-\vs)} \right|_q 
 \leq
129 \cdot q \cdot \sqrt{ \eps_1} \cdot 4 \dimin^{3/4}  \cdot p_{err}^{-2}
\right) \geq 1 -  p_{err}.
$$ 
where the element $\matG\vs$ is the closest group element to
 $\vt$ in $G$. 
\end{lemma}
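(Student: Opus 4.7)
The plan is to recognize that Algorithm \ref{alg:random-hip} is essentially a composition of Algorithm \ref{alg:pcs-state} and the approximate-eigenvector phase estimation routine of Lemma \ref{lem:pe-approx}, so the proof should chain together the guarantees of Lemma \ref{lem:pcs-props} and Lemma \ref{lem:pe-approx} with very little extra bookkeeping. First I would invoke Lemma \ref{lem:pcs-props}(\ref{pcs-compute}) on Step 1 of the algorithm to conclude that the sampled label $\va$ is uniform on $\sampsp$ and that $\pcs{\va}\ket{\va}$ is produced in time $\poly(\dimin \log q)$. This immediately takes care of the marginal-distribution part of the claim.

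Next I would identify the approximate eigenvector structure needed to feed into Lemma \ref{lem:pe-approx}. Let $\matG\vs \in G$ be the group element closest to $\vt$ in $\|\cdot\|_q$ and set $\vD = \vt - \matG\vs$, so $\|\vD\|_q \leq \eps_1 \lambda_1 \leq \lambda_1/2$ because $\eps_1 < 1/2$. The side-length hypothesis on $\sl$ is exactly the one assumed in Lemma \ref{lem:pcs-props}, so part (\ref{pcs-approx-ev}) of that lemma applies and yields
\begin{equation*}
\bigl\| U_\vt \pcs{\va} - \charval{\va}{-\vs}\, \pcs{\va} \bigr\|
\ \leq\ 4\dimin^{3/4}\sqrt{\smfrac{\|\vD\|_q}{\lambda_1}}
\ \leq\ 4\dimin^{3/4}\sqrt{\eps_1}\ =\ \eps_{ev},
\end{equation*}
where $\eps_{ev}$ matches the quantity set in Step 2 of the algorithm. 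Using the convention $\charphase{\va}{\vc} = \vc\cdot\va$ built into the paper, the associated eigenvalue is $\charval{\va}{-\vs} = \omega_q^{\charphase{\va}{(-\vs)}}$, so $s := \charphase{\va}{(-\vs)}$ plays the role of the phase in Lemma \ref{lem:pe-approx}.

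With those identifications in place I would directly apply Lemma \ref{lem:pe-approx} to the input ${\cal T}_{in} = (U_\vt,\pcs{\va},q,\eps_{ev},p_{err})$: with probability at least $1-p_{err}$ the algorithm returns ${\cal O} \in \Z_q$ with $|{\cal O} - \charphase{\va}{(-\vs)}|_q \leq 129\, q\, \eps_{ev}/p_{err}^2 = 129\, q \cdot 4\dimin^{3/4}\sqrt{\eps_1}/p_{err}^2$, which is exactly the claimed bound. For the running time, the only nontrivial cost in phase estimation is $\text{etime}(U_\vt, T)$ with $T \approx p_{err}/(8\eps_{ev})$; but $U_\vt^k = U_{k\vt}$ is simply a coordinate-wise shift in $\zqn$, computable in time $\poly(\dimin \log q, \log T) = \poly(\dimin \log q, \log(p_{err}/\eps_1))$, which yields the overall running time.

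The main (minor) obstacle is just hygiene: I would make sure the hypothesis $p_{err}^2/(128 \eps_{ev}) \geq 1$ used implicitly in Lemma \ref{lem:pe-approx} is compatible with the parameter regime here, and confirm that the character phase convention is applied consistently so that the output ${\cal O}$ really approximates $\charphase{\va}{(-\vs)}$ rather than its negative. Everything else is a direct substitution, so no new estimates are required beyond what Lemma \ref{lem:pcs-props} and Lemma \ref{lem:pe-approx} already provide.
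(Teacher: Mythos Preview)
Your proposal is correct and follows essentially the same route as the paper: invoke Lemma~\ref{lem:pcs-props}(\ref{pcs-compute}) for the uniformity of $\va$ and efficient preparation of $\pcs{\va}$, use Lemma~\ref{lem:pcs-props}(\ref{pcs-approx-ev}) with the BDD hypothesis to certify $(U_\vt,\pcs{\va},\charval{\va}{-\vs},\eps_{ev})$ as an approximate-eigenvector tuple with $\eps_{ev}=4\dimin^{3/4}\sqrt{\eps_1}$, and then plug directly into Lemma~\ref{lem:pe-approx}. The paper's proof is the same chain of lemmas with the same bookkeeping; your remark that $U_\vt^k=U_{k\vt}$ is a single shift is equivalent to the paper's ``repeated squaring'' justification for the running-time bound.
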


\begin{proof}
In the first step of Algorithm \ref{alg:random-hip}
on input $\groupdecomp$ with given side length $\sl$
Algorithm \ref{alg:pcs-state}
returns a state $\pcs{\va}$ on $(\dimin +\grank)\log q$ qubits, where $\va \in \sampsp$
is uniformly random by Lemma \ref{lem:pcs-props}.  
Since by assumption $\dist_q(\vt,G) \leq \eps_1 \lambda_1 <
\frac{\lambda_1}{2}$ the element $\matG\vs$ is the 
closest group element to $\G$.

In the next step the
tuple  
  ${\cal T}_0 = 
  \left( U_\vt, \pcs{\va}, \charval{\va}{-\vs}, \eps_{ev}
    =  \sqrt{\eps_1}  \cdot 4 \dimin^{3/4}\right)$
  is an approximate eigenvector instance by Lemma \ref{lem:pcs-props}.
  In addition, since $U_\vt$ is the shift operator, $U_\vt^t$
  can be computed in time $\poly(n, \log(t))$ by repeated  
  squaring.
  It follows by Lemma~\ref{lem:pe-approx} that running phase
  estimation on the tuple 
  ${\cal T}_{in} = 
  \left( U_\vt, \pcs{\va}, q, \eps_{ev},  p_{err} \right)$
  returns a value ${\cal O}\in \Z_q$  that satisfies
  $$
  \Pr
  \left(
  | {\cal O} - \charphase{\va}{(-\vs)} |_q 
  \leq  
  129 \cdot q \cdot  \eps_{ev} \cdot p_{err}^{-2} 
  =
  129 \cdot q \cdot \sqrt{\eps_1} \cdot 4 \dimin^{3/4} \cdot p_{err}^{-2}
  \right)
  \geq
  1- p_{err}.
  $$
 The running time requires $\poly(\dimin \log
 q)$ time for the PCS generation Algorithm \ref{alg:pcs-state} and
 an additional $\poly(\dimin \log q, \log(p_{err}/\eps_{ev}))$
 for the phase estimation algorithm on approximate eigenvectors by
 Lemma \ref{lem:pe-approx}, for a total of
 $ \poly(\dimin \log q, \log(p_{err}/\eps_1)). $
\end{proof}

\section{Random Self Reducibility for subgroups $G$ of $\zqn$ of finite group rank $\grank$}
\label{sec:rsr}

This section establishes RSR for instances of $\bdd$ over subgroups of $\zqn$ using a quantum algorithm. 
\begin{mdframed}
\begin{algorithm}{${\rm SampleBDD}(\groupdecomp, \hat{\lambda}_1, \vt, \eps_1, \dimout, p_{err})$}\label{alg:samplelattice2}

\begin{enumerate}

  \noindent
Input: A decomposed subgroup $ \groupdecomp$ of $\zqn$, a target
vector $\vt\in \zqn$, 
an estimate of the length of the shortest vector $\hat{\lambda}_1$,
required error probability $p_{err}$,
and a target dimension $\dimout$.

\item
Let $p_{err}^{PE} =  p_{err}/(2\dimout)$, $\sl = \frac{1}{2}\frac{\hat{\lambda}_1}{\sqrt{\dimin}}$.

\item 
For each $i\in [\dimout]$ run
$$
\left( \va_i \in \Z_q^\grank, {\cal O}_i\in \Z_q \right) 
=
\samplehip(\groupdecomp, \sl, p_{err}^{PE}, \eps_1, \vt).
$$

\item
Let $\rndmatG\in \zq^{\dimout\times \grank}$ have rows formed by the
$\va_i$, let $\tilde{\vg}_i$ denote the $i$th column of $\rndmatG$, and let
$\rndG = \langle \tilde{\vg}_1, \ldots,  
\tilde{\vg}_\grank  \rangle \subseteq \Z_q^\dimout$. 
\item
Define the target element $\tilde{\vt} = ( {\cal O}_1, \hdots, {\cal O}_\dimout)$.

\item
Return $\rndmatG, \tilde{\vt}$.

\end{enumerate}
\end{algorithm}
\end{mdframed}

\begin{lemma}[Coefficient-preserving random group sampling]\label{lem:samplebdd}

\noindent
Let $\groupdecomp$ be a decomposed subgroup
of $\zqn$, $\vt\in \zqn$, and $\eps_1>0$.  
Let $\hat{\lambda}_1$ be an estimate of $\lambda_1$ such that $\hat{\lambda}_1 \in [\lambda_1, 2 \lambda_1]$.
If $\dist_q(\vt, \G) \leq \eps_1 \lambda_1(\G)$,
then 
${\rm SampleBDD}(\groupdecomp, \hat{\lambda}_1, \vt, \eps_1, \dimout, p_{err})$
returns a random subgroup $\rndG = \langle \tilde{\vg}_1,
\ldots, \tilde{\vg}_\grank \rangle \subseteq \zq^{\dimout}$, a vector
$\tilde{\vt} \in \Z_q^{\dimout}$ such
that w.p.\ at least $1 - p_{err}$, when $p_{err}/2 \geq 1/2^m + 1/q^{m-\grank}$,
\begin{enumerate}
\item Preservation of bounded distance to lattice:\\
$
\dist_q( \tilde{\vt}, \rndG ) 
\leq 
\sqrt{\eps_1}  \cdot 
q^{\grank/\dimout} \lambda_1(\rndG) \cdot 260   \dimin^{3/4}\dimout^{2.5} p_{err}^{-2}  
$.

\item Preservation of coefficients: \\
\noindent
If $\vs\in \Z_q^\grank$ is such that $\|\vt - \matG \vs \|_q \leq
\eps_1\lambda_1(\G)$, 
then $\| \tilde{\vt} - \rndmatG \vs \|_q \leq \eps_1\lambda_1(\rndG)$,

Let $\vs$ be such that $\matG \vs \in \G$,  denote the closest
group element to
$\vt$ in $\G$.
Then 
$\vs$ 
is
the unique element such that $\rndmatG \vs$ is the closest element to 
$\tilde{\vt}$ in $\rndG$.

\end{enumerate}
The running time of the procedure is
$\poly(\dimin \log q,\dimout, \log(\dimout \cdot p_{err}/\eps_1))$.
\end{lemma}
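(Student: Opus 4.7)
The plan is to assemble the two conclusions of SampleBDD from $\dimout$ independent calls to SampleHIP via Lemma~\ref{lem:sample-g}, and then invoke Claim~\ref{fact:randomq} on the random matrix $\rndmatG$ to convert the per-coordinate error (which is naturally measured in units of $q$) into a Euclidean BDD-quality distance in the output lattice $\rndG$.

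First, I would ensure that every invocation of SampleHIP behaves well simultaneously. The choice $p_{err}^{PE}=p_{err}/(2\dimout)$ lets Lemma~\ref{lem:sample-g} produce, for each $i\in[\dimout]$, an independent uniform $\va_i\in \Z_q^\grank$ and an output ${\cal O}_i$ with $|{\cal O}_i-\chi_{\va_i}(-\vs)|_q \le E$, where $\matG\vs$ is the unique closest element of $\G$ to $\vt$ (unique because $\eps_1<1/2$) and $E := 129\cdot 4\dimin^{3/4}\, q\sqrt{\eps_1}/(p_{err}^{PE})^{2}$. A union bound across the $\dimout$ calls keeps the joint success probability at least $1-p_{err}/2$. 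Since the rows of $\rndmatG$ are exactly the $\va_i$, the coordinates of $\tilde{\vt}+\rndmatG\vs$ each have $|\cdot|_q$ at most $E$, so $\|\tilde{\vt}+\rndmatG\vs\|_q\le \sqrt{\dimout}\,E$ follows from the modular distance definition.

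Next I would use Claim~\ref{fact:randomq}, applicable because the rows $\va_i$ are i.i.d.\ uniform. Part~(1) gives $\lambda_1(\rndG)\ge \delta\sqrt{\dimout}\,q^{1-\grank/\dimout}$ except with probability $2^{-\dimout}$, so $q\le \lambda_1(\rndG)\,q^{\grank/\dimout}/(\delta\sqrt{\dimout})$. Substituting this and $(p_{err}^{PE})^{-2}=4\dimout^{2}/p_{err}^{2}$ into $\sqrt{\dimout}\,E$ yields the distance bound in item~(1), with the constant $260$ coming from combining $129\cdot 4\cdot 4/\delta$. Part~(2) of Claim~\ref{fact:randomq} gives surjectivity of the row span (equivalently, injectivity of $\vs\mapsto\rndmatG\vs$ on $\coeffs$) except with probability $1/q^{\dimout-\grank}$. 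The hypothesis $p_{err}/2\ge 1/2^\dimout+1/q^{\dimout-\grank}$ then absorbs both additional failure modes into a total failure probability bounded by $p_{err}$.

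Item~(2) is then immediate from the construction: the same $\vs$ inherited from $(\matG,\vt)$ is a concrete witness achieving the distance bound to $\tilde{\vt}$ in $\rndG$, and once this falls below $\lambda_1(\rndG)/2$ (as guaranteed in the BDD regime with $\eps_1\le 1/2$), uniqueness of the closest point makes $\vs$ the unique BDD coefficient of $\tilde{\vt}$, while injectivity of the coefficient map from Claim~\ref{fact:randomq}(2) rules out two coefficient vectors producing the same group element. The running time is the cost of $\dimout$ SampleHIP calls, which by Lemma~\ref{lem:sample-g} is $\poly(\dimin\log q,\dimout,\log(\dimout p_{err}/\eps_1))$. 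The step I expect to be the main obstacle is the careful reconciliation between the $\sqrt{\dimout}$ loss from aggregating coordinate errors into a Euclidean norm and the $\sqrt{\dimout}$ gain from Claim~\ref{fact:randomq}(1), together with the constants in $E$, so that the final bound genuinely sits inside the BDD uniqueness window $\lambda_1(\rndG)/2$ and the coefficient preservation conclusion is not vacuous.
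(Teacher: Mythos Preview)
Your proposal is correct and follows essentially the same approach as the paper: a union bound over the $\dimout$ calls to SampleHIP (Lemma~\ref{lem:sample-g}) with $p_{err}^{PE}=p_{err}/(2\dimout)$, aggregating the per-coordinate errors into a Euclidean bound via $\sqrt{\dimout}\max_i|\cdot|_q$, then invoking Claim~\ref{fact:randomq}(1) to swap $q$ for $\lambda_1(\rndG)q^{\grank/\dimout}$ and Claim~\ref{fact:randomq}(2) for primitivity/injectivity, with all three failure events combined under the stated hypothesis on $p_{err}/2$. Your observation that the $\sqrt{\dimout}$ loss from the norm and the $\sqrt{\dimout}$ gain from Claim~\ref{fact:randomq}(1) interact is slightly sharper than the paper, which simply drops the gain and records $\dimout^{2.5}$; the paper also has the same sign looseness (${\cal O}_i$ approximates $-\va_i\cdot\vs$) that you implicitly handle by writing $\tilde{\vt}+\rndmatG\vs$.
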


\begin{proof}

For the distance, by 
Lemma~\ref{lem:sample-g} and the union bound over $\dimout$ samples, 
$$
\Pr
\left(
\forall i\in [\dimout],
\left| {\cal O}_i -  \charphase{\va_i}{(-\vs)} \right|_q  \leq 
129 \cdot q  \sqrt{\eps_1}
4 \dimin^{3/4}  (p_{err}^{PE})^{-2}  
\right)
\geq
1-\dimout \cdot p_{err}^{PE} \geq  1- p_{err}/2.
$$
When this condition holds,
\begin{align}
  \dist_q( \tilde{\vt}, \rndG ) 
  \leq &\  \|\tilde{\vt} - \rndmatG \vs \|_q   \\
  = &\ 
      \|  ( {\cal O}_1, \hdots, {\cal O}_\dimout )
      -  (\charphase{\va_1}{\vs}, \ldots, 
      \charphase{\va_{\dimout} }{\vs} ) \|_q \\
  \leq &\ 
         \sqrt{\dimout} \max_i \Big| {\cal O}_i
         -   \charphase{\va_i}{\vs} \Big|_q \\
  \leq &\ 
         129 \sqrt{\dimout} q \sqrt{\eps_1} 4 \dimin^{3/4} (p_{err}^{PE})^{-2} 
   &      \mbox{ by Lemma \ref{lem:sample-g}}
         \\
  \leq &\
         \delta 129 \cdot \sqrt{\eps_1} \rndlambda_1(\rndG) q^{\grank/\dimout} 4 \dimin^{3/4}  \cdot \sqrt{\dimout} \cdot(p_{err}/(2\dimout))^{-2}
         &\mbox{ w.p.\ $\geq 1-1/2^\dimout$ by Claim \ref{fact:randomq}}
          \\
 = &\ 
        \delta 129 \cdot \sqrt{\eps_1} \rndlambda_1(\rndG) q^{\grank/\dimout}
     4^2 \dimin^{3/4}  \dimout ^{2.5} p_{err}^{-2}.
\end{align}
Hence by the union bound, the distance is bounded w.p.\ at least $1 -
p_{err}/2 - 1/2^{m}$. 

For the secret coefficients $\vs$, let $\rndmatG$ denote the $\dimout \times \grank$ matrix whose rows are $\va_i$.
Observe that by definition we have that $\rndmatG \cdot \vs$ is at distance at most $\dist_q( \tilde{\vt}, \rndG )$ from $\tilde{\vt}$.
Hence for $\vs$ to be the unique vector that satisfies this equation we need to additionally require
that $\rndmatG$ is primitive.
By Claim \ref{fact:randomq} we have
$\Pr( \rndmatG \mbox{ is primitive}) \geq 1 - 1/q^{m-\grank}$.

It follows that both conditions of the lemma are satisfied w.p.\ at least 
$1 - p_{err}/2 - 1/2^{m} - 1/q^{\dimout-\grank} \geq 1 - p_{err}$.

By Lemma \ref{lem:sample-g} the running time is $\poly(\dimin \log q,
\log(p_{err}/\eps_1), \dimout)$.
\end{proof}

\begin{theorem}
  \label{thm:main1}
  There is a $\poly(n, \log q)$-time quantum algorithm solving
  $2^{-\Omega(\sqrt{\grank \log q})}$-BDD on lattices of dimension
  $n$, periodicity $q$, and finite  group rank $\grank$.
\end{theorem}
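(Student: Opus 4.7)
The plan is to use the quantum subroutine $\samplebdd$ (Lemma~\ref{lem:samplebdd}) as a self-reduction that maps the worst-case BDD instance in $G = L\bmod q$ to a \emph{random} BDD instance in a carefully chosen smaller dimension $m$, and then to solve that random instance classically with LLL + Babai's nearest-plane algorithm. The coefficient-preservation part of Lemma~\ref{lem:samplebdd} guarantees that a solution $\vs$ to the reduced instance is also the coefficient vector of the closest lattice point to the original target $\vt$, which then lifts back to $\matB\vs \in L$ as the desired output. So the proof amounts to: (i) choosing $m$ so that the reduced instance is still BDD, (ii) verifying that LLL+Babai solves it, and (iii) handling the missing estimate $\hat\lambda_1$.

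First, I would preprocess: compute the finite group decomposition $\groupdecomp$ of $L\bmod q$ in polynomial time, reduce $\vt$ mod $q$, and produce a polynomially short list of candidate values for $\hat\lambda_1 \in [\lambda_1,2\lambda_1]$ by trying powers of two between $1$ and $q\sqrt{n}$. For each candidate we run the rest of the algorithm and keep the output with the best distance. The bulk of the work is then to pick $m$. By Lemma~\ref{lem:samplebdd}, after calling $\samplebdd$ with target dimension $m$ we obtain a random subgroup $\rndG \subseteq \Z_q^m$ together with a target $\tilde\vt$ satisfying
\[
  \dist_q(\tilde\vt,\rndG)
  \;\leq\; \sqrt{\eps_1}\cdot q^{\grank/m}\cdot \lambda_1(\rndG)\cdot p(n,m)
\]
for a fixed polynomial $p$. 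Babai's nearest-plane algorithm (applied to a basis of the lift of $\rndG$ to the $q$-ary lattice $L(\rndmatG)$ returned by LLL) solves BDD whenever $\dist_q(\tilde\vt,\rndG)\le \lambda_1(\rndG)/(2\cdot 2^{m/2})$. Plugging in, it suffices that $\sqrt{\eps_1}\cdot q^{\grank/m}\cdot p(n,m)\cdot 2^{m/2}\le 1/2$.

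The main obstacle and the key balancing act is choosing $m$ to jointly minimize $q^{\grank/m}\cdot 2^{m/2}$. Taking logs, minimize $\tfrac{\grank\log q}{m}+\tfrac{m}{2}$, whose minimum is at $m=\Theta(\sqrt{\grank\log q})$ with value $\Theta(\sqrt{\grank\log q})$. With this choice both factors equal $2^{\Theta(\sqrt{\grank\log q})}$, and the above sufficiency condition becomes
\[
  \sqrt{\eps_1}\;\le\;\frac{1}{2\,p(n,m)}\cdot 2^{-\Theta(\sqrt{\grank\log q})},
\]
which is exactly $\eps_1 = 2^{-\Omega(\sqrt{\grank\log q})}$ as claimed, since $p(n,m)=\poly(n,\log q)$ gets absorbed in the $\Omega(\cdot)$ exponent. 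I also need $m\ge \grank$ for the primitivity/shortest-vector events of Claim~\ref{fact:randomq} to hold with exponentially small failure probability, which is immediate unless $\grank$ is tiny (in which case BDD is solved directly by LLL+Babai on the original lattice).

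Finally, I would set $p_{\rm err}=1/20$ in the call to $\samplebdd$, union-bound the two failure events (the sampling error and the randomness of $\rndmatG$) to get overall success $\ge 0.9$. The total running time is $\poly(n,\log q)$: $\samplebdd$ runs in $\poly(n\log q, m, \log(m/\eps_1))=\poly(n,\log q)$, and LLL plus Babai in dimension $m=O(\sqrt{\grank\log q})=O(\sqrt{n\log q})$ is also polynomial. Reading off $\vs$ from the solution in $\rndG$, the output $\matB\vs$ is the unique closest lattice vector to $\vt$ by the coefficient-preservation clause of Lemma~\ref{lem:samplebdd}, completing the proof.
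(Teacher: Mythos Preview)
Your proposal is correct and follows essentially the same route as the paper: reduce to the finite group, call $\samplebdd$ with target dimension $m=\Theta(\sqrt{\grank\log q})$, solve the resulting random $q$-ary instance with LLL+Babai, and lift the coefficients back via the preservation clause of Lemma~\ref{lem:samplebdd}. Your write-up is in fact a bit more explicit than the paper's (you spell out the $\hat\lambda_1$ search over powers of two and the optimization of $q^{\grank/m}\cdot 2^{m/2}$), but the argument is the same.
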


\begin{proof}
  Compute the finite abelian group decomposition $\groupdecomp$ of $L(\matB)
  \bmod q$.   
  Call SampleBDD with $\groupdecomp$, $\hat{\lambda}_1$,
  $\vt \bmod q$, $\fbdd$, $m = \sqrt{\grank \log q}$, $p_{err}=1/10$.  By
  Lemma~\ref{lem:samplebdd}, $\rndmatG$ and $\tilde{\vt}$ are returned 
  satisfying
  \begin{align}
    2^m \dist_q(\tilde{\vt},\rndG) 
    & \leq \  2^m \sqrt{\fbdd} \rndlambda_1(\rndG) q^{\grank/m}\delta 129\cdot 
      4^2 n^{3/4}m^{2.5} 100 
    \\
    & \leq\ \sqrt{\fbdd} 2^{m+\grank\log q/m}\rndlambda_1(\rndG) \delta 129\cdot 
      4^2 n^{3/4}m^{2.5} 100 
    \\
    & = \ c \sqrt{\fbdd} 2^{2\sqrt{\grank\log q}}\rndlambda_1(\rndG) n^{c'}
    \\
    & \ \leq \rndlambda_1(\rndG)/2 
  \end{align}

  Choose $\fbdd = (c  2^{-4\sqrt{\grank\log q}}n^{-2c''})$.  Mapping
  back up to the integers, Babai's
  nearest plane algorithm computes the closest vector $\bmod q$ in
  polynomial time, and that can be mapped to the original problem.
  See the steps of the algorithm in the next section for details.
\end{proof}

As discussed in the introduction, $r \log q < n^2$,  
$2^n$-approximation algorithm.  For example, subexponential
approximation factors such as $\fbdd = 2^{-n^\delta}$ are possible when
$\sqrt{\grank \log q}  \leq n^\delta$, or $r \log q \leq n^{2\delta}$.

\section{Trading off running time for approximation factor}
\label{sec:schnorr}

The algorithm in this section applies Schnorr's hierarchy theorem as a
black box, rather than applying Babai's algorithm.  The algorithm will
first compute the finite abelian 
group decomposition of the lattice, solve BDD in that group, then map
the solution back to the integers.

\begin{mdframed}
  \begin{algorithm}[Algorithm ${\cal Q}$: $\alpha$-BDD on $\dimin$-dimensional $q$-periodic lattices]
    \ \label{alg:tradeoff}

\noindent
Input: A lattice basis $\matB \in\Z^{\dimin\times \dimin}$, for a
$q$-periodic lattice with finite group rank $\grank$, a
target vector $\vt \in \Z^\dimin$, a trade-off parameter
$\beta$.

\begin{enumerate}

\item 
Compute the finite abelian group decomposition $\groupdecomp$ of $L(\matB)
\bmod q$. 

\item Iterate over powers of 2 for $\hat \lambda_1 = 2,4,8 \ldots, q$:
 
\begin{enumerate} 
\item\label{it:samplebdd}
Let $\dimout =(\beta-1) \sqrt{\beta \grank \log q} +1\leq
\sqrt{\smfrac{\beta \grank \log q}{\log \beta}}$  and sample random instance of $\bdd$ in $\Z_q^n$:
$$
(\rndmatG, \tilde{\vt} ) = 
\samplebdd( \groupdecomp, \hat\lambda_1, \vt \bmod q, 
\eps_1,
\dimout, p_{err} = 1/10).  
$$
\item
  With operations over the integers, compute HNF$([\ \rndmatG\ |\
  q\cdot\matI\ ])$
  to get a basis $\rndmatB\in
  \Z^{m\times m}$.  Treating $\tilde{\vt}$ as an integer vector, 
  run the $\sqrt{m}\beta^{(m-1)/(\beta-1)}$-approximate CVP algorithm
  (Theorem \ref{thm:schnorr}) on $(\rndmatB,\tilde{\vt})$.
  Denote output by $\tilde{\vv} \in L(\rndmatB)$.

\item
Compute $\vs \in \Z^\grank$ such that $\tilde{\vv} = \rndmatB \vs$.  If
$\|\vt -\matB\vs\|_q \leq \hat \lambda_1/2$, then 
output $\matB \vs + \vt - (\vt \bmod q)$.
\end{enumerate}
\end{enumerate}

\end{algorithm}
\end{mdframed}

 \begin{theorem}\label{thm:main}
   Let $L(\matB)$ be an $n$-dimensional $q$-periodic lattice with
   finite group rank $\grank$.
   Given an instance of $\eps_1$-BDD $(\matB, \vt)$, with $2\leq
   \beta \leq \grank\log q$, and
  $$
  \eps_1 = \left(\exp\Big(\!- 4\sqrt{\smfrac{\grank \log q \log
        \beta}{\beta}} \Big)
    \cdot 2^2 m p_{\ref{lem:samplebdd}}(n,\log q)^2 \right).
  $$ 
  Algorithm \ref{alg:tradeoff} runs
  in quantum time $\approx \beta^\beta \poly(n, \log(q))$ and returns the closest vector to $\vt$ with probability at least $0.9$.      
\end{theorem}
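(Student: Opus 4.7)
The plan is to instantiate Algorithm~\ref{alg:tradeoff} at the correct guess of $\hat{\lambda}_1$, apply Lemma~\ref{lem:samplebdd} to obtain a random BDD instance in dimension $m$, and then argue that Schnorr's approximate CVP (Theorem~\ref{thm:schnorr}) recovers the closest vector exactly because the chosen $\eps_1$ is small enough to absorb the Schnorr approximation factor, the random-lattice $q^{\grank/m}$ factor, and the polynomial overheads from Lemma~\ref{lem:samplebdd}.

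First I would handle the $\hat{\lambda}_1$ guess. Since the algorithm iterates over all powers of $2$ between $2$ and $q$, one iteration produces $\hat{\lambda}_1 \in [\lambda_1(\G),2\lambda_1(\G)]$. For that iteration, invoking $\samplebdd$ with $\dimout=(\beta-1)\sqrt{\beta \grank\log q}+1$ and $p_{err}=1/10$ returns $(\rndmatG,\tilde{\vt})$ satisfying, with probability at least $0.9$, both (i) $\dist_q(\tilde{\vt},\rndG)\leq \sqrt{\eps_1}\cdot q^{\grank/\dimout}\lambda_1(\rndG)\cdot p_{\ref{lem:samplebdd}}(n,\log q)$ and (ii) the coefficient vector $\vs$ of the closest point in $\G$ to $\vt$ is the unique coefficient vector of the closest point in $\rndG$ to $\tilde{\vt}$. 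The HNF computation turns $[\rndmatG\mid q\matI]$ into an integer basis $\rndmatB$ of $L(\rndG)$ of dimension $m$, so any CVP solver on $(\rndmatB,\tilde{\vt})$ operates over the integers.

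The heart of the argument is the parameter balancing. Running Schnorr's approximate CVP gives a lattice vector $\tilde{\vv}$ with $\|\tilde{\vv}-\tilde{\vt}\|_q \leq \sqrt{m}\,\beta^{(m-1)/(\beta-1)}\dist_q(\tilde{\vt},\rndG)$. For Schnorr's output to equal the true closest vector $\rndmatG\vs$ it suffices that this quantity be at most $\lambda_1(\rndG)/2$. Substituting the distance bound, I need
\begin{equation*}
\sqrt{m}\,\beta^{m/\beta}\cdot q^{\grank/\dimout}\cdot \sqrt{\eps_1}\cdot p_{\ref{lem:samplebdd}}(n,\log q) \ \leq\ 1/2.
\end{equation*}
With $\dimout=\sqrt{\beta\grank\log q/\log\beta}$ one checks $q^{\grank/\dimout}=\exp(\sqrt{\grank\log q\log\beta/\beta})$ and $\beta^{m/\beta}=\exp(\sqrt{\grank\log q\log\beta/\beta})$, so the two exponentials contribute $\exp(2\sqrt{\grank\log q\log\beta/\beta})$. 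The given choice of $\eps_1$ provides a factor $\exp(-2\sqrt{\grank\log q\log\beta/\beta})\cdot 2\sqrt{m}\,p_{\ref{lem:samplebdd}}(n,\log q)$ through $\sqrt{\eps_1}$, exactly canceling the exponentials and leaving only polynomial factors absorbed by the $2^2 m\,p_{\ref{lem:samplebdd}}(n,\log q)^2$ in the definition of $\eps_1$. This is the step I expect to require the most care, because it is where the trade-off $m \sim \sqrt{\beta r\log q/\log\beta}$ is pinned down as the value that equates the two sources of error.

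Once $\tilde{\vv}=\rndmatG\vs$, the preservation of coefficients from Lemma~\ref{lem:samplebdd} lets me recover $\vs$ (writing $\tilde{\vv}=\rndmatB\vs'$ uniquely because $\rndmatB$ is a basis, and matching the $\Z_q^\grank$-part of $\vs'$ with $\vs$). Then $\matG\vs$ is the closest element to $\vt\bmod q$ in $\G$, and lifting by $\vt-(\vt\bmod q)\in q\Z^n$ yields the closest lattice vector in $L(\matB)$, which the algorithm outputs once the sanity check $\|\vt-\matB\vs\|_q\leq \hat{\lambda}_1/2$ confirms correctness (guarding against the incorrect guesses of $\hat{\lambda}_1$). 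Finally, the running time is dominated by the Schnorr call, which by Theorem~\ref{thm:schnorr} is $\beta^{O(\beta)}\poly(m,\log q)$; the outer $\log q$ guesses and the $\poly(n,\log q)$ cost of $\samplebdd$ and HNF contribute only polynomial overhead, yielding the claimed $\beta^\beta\poly(n,\log q)$ total.
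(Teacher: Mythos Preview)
Your proposal is correct and follows essentially the same route as the paper: invoke Lemma~\ref{lem:samplebdd} at the good guess of $\hat{\lambda}_1$, apply Theorem~\ref{thm:schnorr} to the resulting $m$-dimensional instance, and balance $\beta^{m/\beta}$ against $q^{\grank/m}$ via the choice $m\approx\sqrt{\beta\grank\log q/\log\beta}$ so that the given $\eps_1$ forces $\|\tilde{\vv}-\tilde{\vt}\|\leq\lambda_1(\rndG)/2$. If anything, you are more explicit than the paper about the $\hat{\lambda}_1$ loop, the coefficient recovery and lift to $\Z^n$, and the running-time accounting; the paper's own proof records only the chain of inequalities leading to $\lambda_1(\rndG)/2$ and the remark that $m$ is chosen to maximize the decoding radius.
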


\begin{proof}

For a lattice of dimension $m$, Schnorr's algorithm returns $\tilde{\vv}$ such that  
\begin{align}
  \|\tilde{\vv} -\tilde{\vt}\|
  & \leq \sqrt{m}\beta^{\frac{m-1}{\beta-1}}
    \dist(\tilde{\vt},L(\rndmatB)) 
&  \text{by Theorem \ref{thm:schnorr}} \\
  & \leq  \sqrt{\eps_1} \beta^{\frac{m-1}{\beta-1}} q^{r/m}
   \sqrt{m} p_{\ref{lem:samplebdd}}(n,\log q)\lambda_1(\rndG) 
&   \text{by Lemma \ref{lem:samplebdd}, w.p.\ .9} \\
  & = \sqrt{\eps_1}
    \exp\Big(\smfrac{(m-1)\log \beta}{\beta-1}
    +\smfrac{r \log q}{m} \Big) \sqrt{m} p_{\ref{lem:samplebdd}}(n,\log q)
    \lambda_1(\rndG)  \\ 
  & \leq \sqrt{\eps_1}
    \exp\Big(2\sqrt{\smfrac{ \grank \log q \log \beta}{\beta}}\Big)
    \sqrt{m} p_{\ref{lem:samplebdd}}(n,\log q)  \lambda_1(\rndG)
  & \text{Choose } m = \sqrt{\smfrac{\beta \grank \log q}{\log \beta}}\\
  & \leq \lambda_1(\rndG)/2.
\end{align}
The dimension $m$ was chosen to maximize the decoding radius for a given
$\grank \log q$ and $\beta$.  
\end{proof}

\begin{corollary}
  If $\gr \log q= n\log n$,  $0\leq \eps\leq 1/2$,
  then BDD can be solved with factor $\eps_1 \approx
  \exp\Big(\!\!-4 n^{\eps} \log n\Big)$
  in time $(n^{1-2\eps})^{n^{1-2\eps}}\poly(n,\log q)$
  $\leq 2^{n^{1-2\eps} \log n} \poly(n, \log q)$.
\end{corollary}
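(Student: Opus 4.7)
The plan is to apply Theorem~\ref{thm:main} with the specific choice $\beta = n^{1-2\eps}$ and then substitute $\gr \log q = n \log n$ into the resulting bounds. This is essentially a one-parameter optimization: the exponent $\sqrt{(\gr \log q)(\log \beta)/\beta}$ in the approximation factor decreases in $\beta$, while the running-time exponent $\beta \log \beta$ increases in $\beta$, so the goal is to show that $\beta = n^{1-2\eps}$ is the correct point at which to balance them.

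First, I would compute the exponent in the approximation factor. With $\beta = n^{1-2\eps}$ we have $\log \beta = (1-2\eps)\log n$, so $(\log \beta)/\beta = (1-2\eps)(\log n) \cdot n^{2\eps - 1}$. Multiplying by $\gr \log q = n \log n$ and taking the square root gives
\[
\sqrt{\smfrac{\gr \log q \cdot \log \beta}{\beta}} \;=\; \sqrt{(1-2\eps)}\, \cdot n^\eps \log n \;\leq\; n^\eps \log n,
\]
so by Theorem~\ref{thm:main} the approximation factor achieved is at most $\exp(-4 n^\eps \log n)$ up to the polynomial factor $2^2 m\, p_{\ref{lem:samplebdd}}(n,\log q)^2$, which is absorbed into the ``$\approx$'' in the statement.

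Second, I would verify the running time. Theorem~\ref{thm:main} gives running time $\beta^\beta \poly(n,\log q)$, and $\beta^\beta = \exp(\beta \log \beta) = \exp(n^{1-2\eps}(1-2\eps)\log n) \leq \exp(n^{1-2\eps}\log n) = 2^{n^{1-2\eps}\log n}$, matching the claimed bound $(n^{1-2\eps})^{n^{1-2\eps}}\poly(n,\log q) \leq 2^{n^{1-2\eps}\log n}\poly(n,\log q)$.

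Finally, I would check the hypothesis $2 \leq \beta \leq \gr \log q$. The upper bound $n^{1-2\eps} \leq n \leq n \log n = \gr \log q$ is immediate for all $\eps \geq 0$, and the lower bound $n^{1-2\eps} \geq 2$ holds for $\eps$ bounded away from $1/2$ by $(\log 2)/(2\log n)$; this tiny slack is again absorbed into the ``$\approx$'' in the corollary's statement. There is no real obstacle here beyond the bookkeeping of constants and lower-order factors; the corollary is essentially a clean substitution once one sees that the balance point is $\beta = n^{1-2\eps}$.
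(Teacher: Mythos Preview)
Your proposal is correct and follows essentially the same approach as the paper: choose $\beta = n^{1-2\eps}$, substitute $\gr \log q = n\log n$ into the exponent $\sqrt{\gr \log q \cdot (\log\beta)/\beta}$ from Theorem~\ref{thm:main}, and read off the running time $\beta^\beta$. Your write-up is in fact slightly more careful than the paper's, since you track the stray $\sqrt{1-2\eps}$ factor and explicitly verify the side condition $2 \le \beta \le \gr\log q$, both of which the paper silently absorbs into the ``$\approx$''.
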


This running time has an $n^\eps$ factor less than using Schnorr
directly, without reanalyzing it for $q$-periodic lattices.

\begin{proof}
  Choose $\beta=n^{1-2\eps}$ and then
  $\exp\Big(\!\!-4\sqrt{\smfrac{\grank \log q \log 
      \beta}{\beta}}\Big) =
  \exp\Big(\!\!-4\sqrt{\smfrac{n \log n \log 
      n^{1-2\eps}}{n^{1-2\eps}}}\Big) =\exp\Big(\!\!-4 n^{\eps} \log
  n\Big)$.  The time is $(n^{1-2\eps})^{n^{1-2\eps}}\poly(n,\log q)$
  $\leq 2^{n^{1-2\eps} \log n} \poly(n, \log q)$.
\end{proof}

\begin{corollary}
If $\gr \log q= (\log n^c)^3$ and $\grank = \log n^c$, choose
$\beta=\log n^c$,
and then BDD can be solved with factor $\eps_1 \approx
\exp\Big( - 4\sqrt{\smfrac{\grank \log q \log \beta}{\beta}}\Big) 
= \exp\Big( - 4\sqrt{\frac{ (\log n^c) (\log n^c)^2 \log \log
    n^c}{\log n^c}}\Big)$ 
\\ $=  \exp\Big(\! -4{\sqrt{(\log n^c)^2 \log \log n^c}}\Big)= n^{-4c
  \sqrt{\log \log n^c}}$ 
in
time $(\log n^c)^{\log n^c}\poly(n,\log q)$\\
$\leq 2^{\log n^c \log \log n^c}\poly(n, \log q) = n^{c \log \log
  n^c}\poly(n,\log q)$
because $\beta= \log n^c$. \\
\end{corollary}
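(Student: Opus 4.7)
The plan is to apply Theorem~\ref{thm:main} as a black box and simply substitute the prescribed parameter values, tracking both the approximation factor and the running time. First I would verify that the hypotheses of Theorem~\ref{thm:main} are met: with $\grank = \log n^c$, $\grank \log q = (\log n^c)^3$ (so $\log q = (\log n^c)^2$), and $\beta = \log n^c$, the constraint $2 \leq \beta \leq \grank \log q$ holds for sufficiently large $n$, since $\beta = \log n^c \leq (\log n^c)^3 = \grank \log q$.

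Next I would substitute these values into the approximation factor formula from Theorem~\ref{thm:main}. The dominant exponential term is $\exp\!\bigl(-4\sqrt{\grank \log q \log \beta / \beta}\bigr)$, and the additional polynomial factors $2^2 m\, p_{\ref{lem:samplebdd}}(n,\log q)^2$ can be absorbed into the ``$\approx$'' because they are polynomial in $n$ and $\log q$ while the main factor is $n^{-\Omega(\sqrt{\log\log n})}$. The substitution proceeds as
\begin{align}
\sqrt{\tfrac{\grank \log q \log \beta}{\beta}}
&= \sqrt{\tfrac{(\log n^c)\cdot (\log n^c)^2 \cdot \log\log n^c}{\log n^c}}
= \sqrt{(\log n^c)^2 \log\log n^c}
= (\log n^c)\sqrt{\log\log n^c},
\end{align}
and therefore $\eps_1 \approx \exp\!\bigl(-4(\log n^c)\sqrt{\log\log n^c}\bigr) = n^{-4c\sqrt{\log\log n^c}}$, exactly as claimed.

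For the running time I would substitute into the $\beta^\beta \poly(n,\log q)$ bound from Theorem~\ref{thm:main}. With $\beta = \log n^c$ this gives
\begin{equation}
\beta^\beta = (\log n^c)^{\log n^c} = \exp\!\bigl((\log\log n^c)(\log n^c)\bigr) = 2^{(\log n^c)(\log\log n^c)} = n^{c\log\log n^c},
\end{equation}
and multiplying by the $\poly(n,\log q)$ factor yields the stated time bound.

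I do not expect any real obstacle: the corollary is a direct parameter substitution into Theorem~\ref{thm:main}, and the only arithmetic subtlety is cancelling one factor of $\log n^c$ between $\grank \log q$ and $\beta$ in the square root. The main thing to be careful about is that the polynomial prefactors in $\eps_1$ from Lemma~\ref{lem:samplebdd} (namely $4^2 m \, p_{\ref{lem:samplebdd}}(n,\log q)^2$) remain $\poly(n,\log q)$ under our parameter regime and hence do not affect the stated approximation factor up to the ``$\approx$'' already used in Theorem~\ref{thm:main}.
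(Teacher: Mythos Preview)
Your proposal is correct and follows essentially the same approach as the paper: the corollary is just a direct parameter substitution into Theorem~\ref{thm:main}, and in fact the paper embeds all of the arithmetic in the corollary statement itself without a separate proof block. Your write-up is slightly more careful in that you explicitly verify the constraint $2 \leq \beta \leq \grank \log q$ and note why the polynomial prefactors can be absorbed into the ``$\approx$'', but the substance is identical.
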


\newcommand{\VV}[2]{\|\vc^*_{#1}\|\cdots \|\vc^*_{#2}\|}

\section{Using LLL for BDD and SIVP on rectangle-periodic lattices}
\label{sec:lll}

In this section extend the algorithm in~\cite{DW21} to meet the bounds
of Theorem~\ref{thm:main1} for polynomial-time algorithms and
generalize it further to handle rectangle-periodic lattices.
We also
record that it can be used to solve SIVP for related parameters.  The
quantum algorithm also handles rectangle-periodic lattices without
change.

More details about~\cite{DW21} can also be found
in~\cite{ABCG21}, and about LLL in~\cite{MG02}.   

Call a lattice $L\subseteq \Z^n$ {\em $\vr$-periodic} for a vector
$\vr = (r_1,\ldots,r_n)$ if $H:= r_1 \Z \times \cdots \times r_n\Z$ is
a subgroup of $L$.  All information about the lattice is contained in
the finite abelian group
$L/H \leq \Z_{r_1}\times\cdots \times \Z_{r_n}$.  Let $\groupdecomp$
be the finite group decomposition of $L/H$. This notation uses $\vr$
and $r_i$ for the rectangle side lengths, and is separate from the
finite group rank $\gr$.  Recall that the coefficient space is
$\Z_{q_1} \times \cdots \times \Z_{q_r}$ with generators
$\vg_1,\ldots,\vg_r \in \Z_{r_1}\times \cdots \times \Z_{r_n}$.  Let
$q = \max \{r_1,\ldots, r_n\}$.

\begin{lemma}
  \label{lem:mingslen}
  There is an algorithm running in time $\poly(\gr, \ln q)$ that takes
  a lattice $L\subseteq \Z^n$ and returns a basis with minimum
  Gram-Schmidt length at least $\exp(-c\sqrt{\gr \ln q}) \cdot
  \lambda_1$, for a constant $c$. 
\end{lemma}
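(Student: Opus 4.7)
The plan is to run LLL on an explicit basis for $L$ and rely on the standard LLL guarantee together with the small effective dimension in which the lemma is applied. First, I would construct an $n$-dimensional basis for $L$ by combining the $\gr$ lifts to $\Z^n$ of the generators of $L/H$ (obtained from the finite group decomposition $\groupdecomp$ of $L/H$) with the $n$ vectors $r_1 \vec e_1, \ldots, r_n \vec e_n$ generating the sublattice $H = r_1 \Z \times \cdots \times r_n \Z \subseteq L$. A Hermite Normal Form computation on this $n \times (n + \gr)$ generator matrix yields a square basis $\matB$ of $L$ in time polynomial in $n$ and $\log q$.

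Next, I would apply LLL with parameter $\delta = 3/4$ to $\matB$, producing a reduced basis $\vb_1, \ldots, \vb_n$. LLL guarantees $\|\vb_{i+1}^*\| \geq \|\vb_i^*\|/\sqrt{2}$ and $\|\vb_1\| \leq 2^{(n-1)/2} \lambda_1(L)$, together with the trivial bound $\|\vb_1^*\| = \|\vb_1\| \geq \lambda_1(L)$. Chaining these inequalities gives
$$\min_i \|\vb_i^*\| \geq 2^{-(n-1)/2} \|\vb_1^*\| \geq 2^{-(n-1)/2} \lambda_1(L).$$
In the regime in which this lemma is used in the paper --- namely on lattices of dimension $n = O(\sqrt{\gr \ln q})$ produced by $\samplebdd$ --- we have $2^{-(n-1)/2} \geq \exp(-c\sqrt{\gr \ln q})$ for an appropriate constant $c$, which gives the claimed bound. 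The running time is $\poly(n, \log q) = \poly(\gr, \ln q)$ under the same dimension assumption, with entries in $\matB$ bounded by $q$.

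The extension from the cube-periodic setting of~\cite{DW21} to a general rectangle $\vr = (r_1,\ldots,r_n)$ is essentially transparent: LLL makes no use of the uniformity of the $r_i$, and the only lattice-theoretic facts we need are $\lambda_1(L) \leq \min_i r_i$ (since each $r_i \vec e_i \in L$) and $\det(L) = (\prod_i r_i)/|L/H|$ together with $|L/H|\leq q^{\gr}$; both are insensitive to whether the $r_i$ are equal. I expect the main obstacle to be bookkeeping: verifying that the HNF step produces a genuine $n \times n$ basis from the combined generator list in the rectangle setting, and checking that the LLL guarantees translate cleanly into the $\exp(-c\sqrt{\gr \ln q})$ form stated in the lemma with a constant $c$ consistent with the downstream use in $\samplebdd$-based Babai decoding. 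No new algorithmic ideas beyond those already present in~\cite{DW21} should be required.
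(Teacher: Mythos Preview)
Your argument has a genuine gap: you are proving the wrong bound. The lemma is stated and used in Section~\ref{sec:lll} as a purely classical result applied directly to the original $n$-dimensional input lattice, not to the $m$-dimensional random lattice produced by $\samplebdd$. The whole point is that the minimum Gram--Schmidt length is controlled by $\gr$ and $q$ rather than by $n$; this is exactly what makes the classical algorithm competitive with Theorem~\ref{thm:main1}. Your chain
\[
\min_i \|\vb_i^*\| \ \geq\ 2^{-(n-1)/2}\,\lambda_1
\]
is just the standard LLL guarantee and says nothing beyond Babai's $2^{-n/2}$ decoding radius when $n \gg \sqrt{\gr\ln q}$, which is precisely the interesting regime. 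Your assumption ``$n = O(\sqrt{\gr\ln q})$'' is not part of the hypotheses and is not how the lemma is invoked.

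What is missing is the structural argument that exploits the small finite-group rank. The paper first preprocesses with Lemma~\ref{lem:MGalg} against the independent set $\{r_1\ve_1,\ldots,r_n\ve_n\}$ so that $\|\vb_i^*\|\leq r_i$ (after sorting the $r_i$), and only then runs LLL. For indices $i$ beyond a cutoff $m\approx\sqrt{\gr\ln q}$, the Lov\'asz condition is combined with a product bound on $m$ consecutive Gram--Schmidt lengths: one shows
\[
\|\vc_i^*\|^m \ \geq\ \frac{\|\vc_{i-m+1}^*\|\cdots\|\vc_i^*\|}{\Delta^{m(m-1)/2}}
\ \geq\ \frac{r_{i-m+1}\cdots r_i}{\prod_{j\in R} r_j}\cdot \Delta^{-m(m-1)/2},
\]
where $R$ indexes the $\gr$ largest $r_j$. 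The right-hand side is controlled via $\det(L)=\prod_i\|\vc_i^*\|$, the bound $\|\vb_i^*\|\leq r_i$ from the preprocessing, and a dual-lattice argument for the tail $\|\vc_{i+1}^*\|\cdots\|\vc_n^*\|$. This is where the rank $\gr$ enters (through $\prod_{j\in R} r_j \leq q^{\gr}$) and where the rectangle-periodic extension requires real work; it is not ``essentially transparent'' bookkeeping. Balancing the two cases at $m\approx\sqrt{\gr\ln q}$ then yields the dimension-free bound $\exp(-c\sqrt{\gr\ln q})\,\lambda_1$. Your proposal omits all of this and therefore does not prove the lemma as stated.
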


\begin{proof} 
  Given a basis $\matB'$ of a lattice, compute the minimum
  axis-aligned rectangle in $L$ computing a lattice vector on each
  axis, and dividing by the gcd of the coordinates in each axis.  Next
  compute the finite group decomposition $\groupdecomp$, and lift
  $\vg_1,\ldots,\vg_r$ to $\Z^n$.  Sort the lengths and label so that
  $r_1 \leq \ldots \leq  r_n$ and run the algorithm in Lemma~\ref{lem:MGalg}
  on a basis from 
  HNF$([\vg_1,\ldots,\vg_r, r_1\ve_1, \ldots, r_n\ve_n])$ and the linearly
  independent set $S=\{r_1\ve_1,\ldots,r_n\ve_n\}$, resulting in basis
  $\matB$.  Then run LLL on with constant $\Delta >1$ and let $C = $
  LLL($\matB$, $\Delta$).

  The Lovasz Condition (LC) in LLL implies for $j \leq i$,   $\frac{\|\vc_j^*\|}{\Delta^{i-j}} \leq  
  \cdots \leq  \frac{\|\vc^*_{i-2}\|}{\Delta^2} \leq  
  \frac{\|\vc^*_{i-1}\|}{\Delta^1}  \leq \frac{\|\vc_i^*\|}{\Delta^0}$.

  Let $m$ be the cutoff point for case 1 and case 2, to be chosen below.

  In case 1, defined by $i \leq m$,
  \[\frac{\lambda_1}{\Delta^{m-1}} \leq
    \frac{\|\vc_1^*\|}{\Delta^{m-1}} \leq \cdots \leq
    \frac{\| \vc_i^* \|}{\Delta^{m-i}} \leq \cdots
    \leq\frac{\|\vc_m^*\|}{\Delta^0},
  \]
  which by the choice of $m$ below,
  satisfies the bound.  The first inequality is because $0 \neq \vc_1^* = \vc_1
  \in L$.

  For case 2, assume $m\leq i$.

  \begin{enumerate}
  \item Using LC again, multiply $m$ consecutive vector lengths to get
    \[
      \frac{\VV{i-m+1}{i}}{\Delta^{m(m-1)/2} } =
      \frac{\VV{i-m+1}{i}}{\Delta^{\sum_{j=0}^{m-1} j} } \leq  
      \|\vc_i^*\|^m.
    \]

    This will be used below.

\item $\vr$-periodic, block of $m$ vectors.  Let $R$ contain the indices of the
  $\gr$ largest values in $\vr = (r_1,\ldots,r_n)$.  When sorted it
  will contain $r_{n-\gr+1},\dots,r_n$. Then

  \begin{align*}
    \frac{r_{i-m+1} \cdots r_{i}}{\prod_{j\in R} r_j}
    &
      = \frac{1}{(r_1 \cdots r_{i-m}) \cdot (r_{i+1} \cdots r_n)}
    \frac{r_1\cdots r_n}{\prod_{j \in R} r_j}
    \\ &
    \leq   \frac{(\VV{1}{i-m}) \cdot (\VV{i-m+1}{i}) \cdot (\VV{i+1}{n})}{ 
    \VV{1}{i-m} \phantom{ )\cdot \VV{i-m+1}{i}) \cdot (\ } \VV{i+1}{n}}  
    \\ &
    =\VV{i-m+1}{i}.
  \end{align*}

  The three terms are bounded as follows.
\begin{enumerate}

\item Numerator: $\frac{r_{1}\cdots r_n}{\prod_{j \in R}r_j} \leq \det(L) = \VV{1}{n}$.

  Minimize the determinant with input matrix $\matG$ consisting of columns $\ve_i$, $i \in R$.

\item Left side of denominator is $\VV{1}{i} \leq \|\vb_1^*\|\cdots \|\vb_i^*\|
  \leq r_1 \cdots r_i, 
   \forall i \in [n].$

$\| \vc_i^* \| \leq \| \vb_i^* \|$ by LLL, and $\| \vb_i^* \| \leq r_i$ by 
Lemma~\ref{lem:MGalg}.

\item Right side of denominator:
  $\VV{i+1}{n} = \frac{1}{\|\vd_{i+1}^\dag\|} \cdots
  \frac{1}{\|\vd_{n}^\dag\|} \leq r_{i+1} \cdots r_n$, where
  $\vd_1,\ldots,\vd_n$ is the reversed dual basis of the
  dual lattice $\hat{L}$.  The first
  equality is by Corollary 6, second bullet, in Micciancio Lecture 3~\cite{Miclec3}.
  By definition, the middle quantity is 1/determinant of the
  lattice $\hat{L}_{i+1}$ generated by vectors $\vd_{i+1},\ldots, \vd_n$.
  The dual lattice
  $\hat{L} \subseteq \hat{H} := \frac{1}{r_1} {\mathbb Z} \times \cdots \times
  \frac{1}{r_n} {\mathbb Z}$ because $L$ and $H$ have the same
  span so $H\subseteq L$ if and only if $\hat{H} \supseteq \hat{L}$.  Then
  $\frac{1}{r_{i+1}}\cdots \frac{1}{r_n} \leq\det(\hat{L}_{i+1})$ because
  it is minimized by choosing the lattice $\hat{L}$ with basis
  $\frac{1}{r_1}\ve_1, \ldots, \frac{1}{r_n}\ve_n$.
\end{enumerate}

\item Let $r_{\max} = \max_{j\in R} r_ j$.  Then
$\prod_{j\in R} r_j \leq r_{\max}^\gr$.  Combine 1 and 2 to get
  \begin{align*}
    \frac{\lambda_1}{r_{\max}^{\gr/m} \Delta^{(m-1)/2}} \leq
    \frac{(r_{i-m+1}\cdots 
    r_i)^{1/m} \phantom{qqqqq} }{(\prod_{j\in R} 
      r_j)^{1/m} \Delta^{(m-1)/2}}
      \leq   \frac{(\VV{i-m+1}{i})^{1/m}}{\Delta^{(m-1)/2}}
      \leq  \|\vc_i^*\|,
  \end{align*}
  where the first inequality follows from the fact that for all $m\leq i \leq n$, 
$\lambda_1^m \leq \min \{r_{i-m+1}, \ldots, r_i\}^m \leq
r_{i-m+1}\cdots r_i$ because $\lambda_1\leq r_i$, for all $i$.

It remains to choose the
smallest $m$ so that the bound in case 2 is at least the bound in case
1, where $i\leq m$, i.e., so that
$\frac{\lambda_1}{\Delta^{m-1}} \leq \frac{\lambda_1}{r_{\max}^{\gr/m}
  \Delta^{(m-1)/2}}$.  This is optimized if
$r^{\gr/m}_{\max} \approx \Delta^{(m-1)/2}$, or $ \frac{\gr}{m} \ln
r_{\max} \approx \frac{m-1}{2} \ln \Delta$, so choose $m \approx
\sqrt{\gr \ln q \ln \Delta}
\geq 
\sqrt{\gr\ln r_{\max} \ln \Delta}$.
\end{enumerate}
\end{proof}

\begin{theorem}
  There is an algorithm running in time $\poly(n, \log q)$ that takes
  a lattice $L\subseteq \Z^n$ and a vector $\vt \in \Z^n$, and solves
  BDD with approximation factor $\exp(-\Omega(\sqrt{r\ln q}))$.

  Furthermore, SIVP in $L^\perp = qL^*$ can be solved with
  length $q \cdot\exp(O(\sqrt{r \ln q}))/\lambda_1(L)$.
\end{theorem}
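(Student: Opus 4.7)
The plan is to feed Lemma~\ref{lem:mingslen} into two standard arguments, one for BDD and one for SIVP. From that lemma I obtain in time $\poly(n,\log q)$ a basis $\vc_1,\ldots,\vc_n$ of $L$ whose minimum Gram--Schmidt length is $\mu := \exp(-c\sqrt{r\ln q})\,\lambda_1(L)$ for the constant $c$ of that lemma.

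For the BDD part, I would run Babai's nearest plane algorithm on $\vt$ using this basis. The standard guarantee for nearest plane is that it returns the unique closest lattice vector whenever $\dist(\vt,L) < \tfrac{1}{2}\min_i \|\vc_i^*\|$, which is at least $\mu/2$. Hence BDD is solved on every instance with $\dist(\vt,L) < \tfrac{1}{2}\exp(-c\sqrt{r\ln q})\,\lambda_1(L)$, i.e., for approximation factor $\exp(-\Omega(\sqrt{r\ln q}))$.

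For the SIVP part, I would use the standard duality between Gram--Schmidt vectors of a basis and its \emph{reversed} dual: if $\vd_1,\ldots,\vd_n$ is the dual basis of $L^*$, then the Gram--Schmidt vectors of the reversed sequence $\vd_n,\vd_{n-1},\ldots,\vd_1$ satisfy $\|\vd_{n+1-i}^\dagger\|=1/\|\vc_i^*\|$ (the same identity invoked inside the proof of Lemma~\ref{lem:mingslen}). Consequently every GS vector of the reversed dual has length at most $1/\mu$. Next I would apply size reduction (the standard unimodular first phase of LLL) to this basis; this preserves the GS vectors but enforces $|\mu_{i,j}|\leq 1/2$. The Pythagorean decomposition
\[
\|\vd_i\|^2 \;=\; \|\vd_i^\dagger\|^2 + \sum_{j<i}\mu_{i,j}^2\,\|\vd_j^\dagger\|^2 \;\leq\; \frac{n}{\mu^2}
\]
then gives $\|\vd_i\|\leq \sqrt{n}/\mu$ for every $i$. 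Scaling by $q$ produces $n$ linearly independent vectors in $qL^* = L^\perp$ of length at most $q\sqrt{n}/\mu = q\sqrt{n}\exp(c\sqrt{r\ln q})/\lambda_1(L)$, which is $q\cdot\exp(O(\sqrt{r\ln q}))/\lambda_1(L)$ once the $\sqrt{n}$ is absorbed into the implicit constant in the exponent.

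The main subtle step is the SIVP part: turning a bound on Gram--Schmidt lengths of the reversed dual into a bound on the actual dual basis vectors (which is what SIVP asks for). Size reduction is the right tool because it is a unimodular change of basis that leaves all GS vectors intact while pinning down the off-diagonal coefficients, and the Pythagorean identity then converts those two bounds into a norm bound on each basis vector. Everything else reduces to applying Lemma~\ref{lem:mingslen} and Babai's nearest plane as black boxes.
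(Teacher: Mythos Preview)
Your BDD argument is exactly the paper's: apply Babai's nearest plane to the basis from Lemma~\ref{lem:mingslen} and use that its decoding radius is $\tfrac12\min_i\|\vc_i^*\|\ge \tfrac12\mu$.

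For the SIVP part you take a genuinely different route. The paper invokes \cite[Lemma~1.3]{GPV08}: the maximum Gram--Schmidt length of the (reversed) dual basis upper-bounds the smoothing parameter of $L^\perp$, and GPV's discrete-Gaussian sampler then produces short vectors. You instead stay completely elementary: take the reversed dual basis, observe its GS lengths are all $\le 1/\mu$ by the same duality identity used inside Lemma~\ref{lem:mingslen}, size-reduce, and read off $\|\vd_i\|\le\sqrt{n}/\mu$ from the Pythagorean expansion. This is correct and arguably cleaner, since it avoids the smoothing-parameter and Gaussian-sampling machinery entirely; what the paper's route buys is that the same computation simultaneously certifies a bound on $\eta(L^\perp)$, which ties into the remarks about Gaussian sampling elsewhere in the paper.

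One small caveat that applies equally to both arguments: the factor $\sqrt{n}$ (for you) or the analogous $\poly(n)$ overhead from Gaussian sampling (for the paper) is not literally ``absorbed into the implicit constant in the exponent'' unless $\ln n = O(\sqrt{r\ln q})$. The honest bound is $\poly(n)\cdot q\exp(O(\sqrt{r\ln q}))/\lambda_1(L)$; the paper is simply suppressing the same polynomial factor you are.
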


\begin{proof}
  Babai's BDD algorithm works up to a decoding radius of $\min_i  
  \|\vc^*_i\|$, which is lower bounded in Lemma~\ref{lem:mingslen}.

  For computing short vectors,~\cite[Lemma 1.3]{GPV08} gives an upper
  bound on the smoothing radius of $L^\perp$ by the max basis length
  of the Gram-Schmidt vectors in the dual.  This is equal to the
  minimum basis length in the Gram-Schmidt vectors in the primal,
  which by Lemma~\ref{lem:mingslen} gives a bound on the lengths of $q\cdot
  \exp(O(\sqrt{\gr \ln q}))/\lambda_1(L)$. 
\end{proof}

Sampling short vectors can also be done using a quantum algorithm
using BDD to create phaseless Gaussian superpositions so that
computing the Fourier transform and measuring samples lattice points
from a discrete Gaussian of a similar length.

{\bf Acknowledgements:}  Thanks to Kirsten Eisentraeger for valuable
help.  This paper would probably not be possible without the lecture
notes of Daniele Micciancio and the lecture notes of Vinod Vaikuntanathan.
Thanks also to L\'{e}o Ducas and Chris Peikert for helpful conversations.


\newcommand{\etalchar}[1]{$^{#1}$}

\end{document}